\newtheorem{lemma}{Lemma}[section]
\newtheorem{proposition}{Proposition}[section]
\newtheorem{definition}{Definition}[section]
\newtheorem{theorem}{Theorem}[section]
\newtheorem{example}{Example}[section]
\newtheorem{remark}{Remark}
\newcommand{\mtrue}{\mathit{true}}
\newcommand{\mfalse}{\mathit{false}}
\newcommand{\mundef}{\mathit{undef}}
\newcommand{\lub}{\bigsqcup}
\newcommand{\glb}{\bigsqcap}
\newcommand\leqp{\leq_p}
\newcommand\lrule{\leftarrow}
\newcommand\ie{i.e.,\xspace}
\DeclareMathOperator\lfp{lfp}
\newcommand{\timeline}{\mathbb{T}}
\font\Bigmath=cmsy10 scaled \magstep3
\def\diamondplus{\mathrel{%
		\ooalign{$+$\cr\hss\lower.3ex\hbox{\Bigmath\char5}\hss}}}
\def\diamondminus{\mathrel{%
		\ooalign{$-$\cr\hss\lower.3ex\hbox{\Bigmath\char5}\hss}}}
\newcommand{\interval}{\delta}
\newcommand{\negation}{\mathsf{not}\;}
\newcommand{\ground}[1]{\mathsf{ground}(#1)}
\newcommand{\prog}{\Pi}
\newcommand{\semtwo}[3]{\llbracket#1\rrbracket(#2,#3)}
\newcommand{\semcal}[3]{\llbracket#1\rrbracket(\mathcal{#2},#3)}
\newcommand{\semthree}[4]{\llbracket#1\rrbracket(#2,#3,#4)}
\newcommand{\db}{\mathcal{D}}
\newcommand{\immed}[2]{T_{#1,#2}}
\newcommand{\interpret}{\mathbb{I}}
\newcommand{\intercons}{\interpret^{\mathsf{c}}}
\newcommand{\appr}[2]{\Psi_{#1,#2}}
\newcommand\equalhat{\mathrel{\stackon[1.5pt]{=}{\stretchto{%
				\scalerel*[\widthof{=}]{\wedge}{\rule{1ex}{3ex}}}{0.5ex}}}}
\newcommand{\headatom}{\mathcal{M}}
\newcommand{\leqtr}{\leq_{\tau}}
\newcommand{\geqtr}{\geq_{\tau}}
\title{Fixpoint Semantics for DatalogMTL with Negation\thanks{This study was funded
	by Fonds Wetenschappelijk Onderzoek -- Vlaanderen (project G0B2221N)}}
\author{Samuele Pollaci
\institute{Vrije Universiteit Brussel\\ Brussels, Belgium}
\institute{Katholieke Universiteit Leuven\\
Leuven, Belgium}
\email{Samuele.Pollaci@vub.be}
}
\newcommand{\titlerunning}{Fixpoint Semantics for DatalogMTL with Negation}
\newcommand{\authorrunning}{S. Pollaci}
\begin{document}
\maketitle

\begin{abstract}
DatalogMTL$^\neg$ is an extension of Datalog with metric temporal operators enriched with unstratifiable negation. In this paper, we define  the stable, well-founded, Kripke-Kleene, and  supported model semantics for DatalogMTL$^\neg$ in a very simple and straightforward way, by using the solid mathematical formalism of  Approximation Fixpoint Theory (AFT). Moreover, we prove that the stable model semantics obtained via AFT coincides with  the one defined in 
previous work, through the employment of pairs of interpretations stemming from the logic of here-and-there. 
\end{abstract}

\section{Introduction}

Datalog \cite{AbiteboulHV95} is a declarative query language that extends relational algebra with recursion. The last few decades have witnessed the emergence of several variants of Datalog and their implementations \cite{KetsmanK22}, aiming to solve data management tasks in various applications \cite{HuangGL11}. One of such variants is DatalogMTL \cite{BrandtKRXZ18}, which extends the Datalog language with operators from metric temporal logic \cite{Koymans90}, and finds applications in ontology-based query answering \cite{BrandtKRXZ18,KalayciXRKC18,KikotRWZ18,Koopmann19}, stream reasoning \cite{WalegaKG19,WalegaKWG23}, and reasoning for the financial sector \cite{NisslS22,MoriPBG22,ColomboBCL23}, amongst others.

For instance, we can use the following DatalogMTL rule to express that an adult has to wait at least 6 hours after the assumption of a paracetamol tablet, before taking another one:

\begin{equation*}
	\mathit{NoMoreParacetamol}(x) \lrule \mathit{Adult}(x) \wedge \diamondminus_{[0,6]}\mathit{TakesParacetamol}(x),
\end{equation*}
where $\diamondminus_{[0,6]}\mathit{TakesParacetamol}(x)$ is true at a timepoint $t$, if $\mathit{TakesParacetamol}(x)$ was true at some point in the last 6 hours, i.e. at some time $t'$ in the interval $[t-6,t]$. Other than $\diamondminus_{[0,6]}$, there are a few more metric temporal operators that are allowed in the expressions of DatalogMTL. 
For instance, if $\varphi$ is a formula and $[t_1,t_2]$ is an interval on the rational or integer timeline, then $\boxminus_{[t_1,t_2]}\varphi$ holds at time $t$ if $\varphi$ holds at all the timepoints of the interval $[t-t_2,t-t_1]$. 
Analogously, there are also the future-oriented counterparts of $\diamondminus$ and $\boxminus$, namely $\diamondplus$ and $\boxplus$, respectively: $\diamondplus_{[t_1,t_2]}\varphi$ holds at time $t$ if $\varphi$ holds at some timepoint of the interval $[t+t_1,t+t_2]$; and similarly for $\boxplus_{[t_1,t_2]}\varphi$. 
A DatalogMTL dataset is then a set of facts, each paired with a time interval. For example, if 
$\mathit{TakesParacetamol}(\mathit{John})@[8,8]$ 
is in a dataset, then it holds that $\mathit{John}$ took a paracetamol tablet at $8$a.m.

In 2021,  DatalogMTL was extended with stratified negation as failure \cite{CucalaWGK21}, allowing negated atoms in the body of rules, but no recursion. Recently, the range of capabilities of DatalogMTL was further expanded by equipping DatalogMTL with unstratifiable negation under the stable model semantics \cite{WalegaCGK24}, while extending both DatalogMTL with stratified negation and Datalog with stable model negation. In such paper,
to obtain the stable models of a dataset and a program, the authors employed interpretations similar to the ones used for the logic of \emph{here-and-there} (HT)  \cite{Heyting30,Pearce96}, namely pairs $(I,J)$ of (two-valued) interpretations such that the atoms evaluated as \emph{true} by $I$ are $\mtrue$ under $J$ as well. Although used differently, this same framework with pairs of interpretations sits at the core of numerous applications of Approximation Fixpoint Theory (AFT) \cite{DeneckerMT00,DeneckerMT04}, prompting the creation of this paper.

AFT was first developed to study the semantics of non-monotonic logics, such as logic programming, autoepistemic logic, and default logic. It has been successfully applied in several other areas, like abstract argumentation \cite{Strass13}, active integrity constraints \cite{BogaertsC18}, stream reasoning \cite{Antic20}, and constraint languages for the semantic web \cite{BogaertsJ21}.
In the context of (consistent) AFT \cite{DeneckerMT04}, a  pair $(I,J)$ of two-valued interpretations with $I\subseteq J$ is viewed as an  \emph{approximation} of some other two-valued interpretation: an atom interpreted as $\mtrue$ (resp.\ $\mfalse$) by both $I$ and $J$ has a precise, exact value, namely $\mtrue$ (resp. $\mfalse$); while an atom assigned to $\mfalse$ by $I$ but to $\mtrue$ by $J$ can be either $\mtrue$ or $\mfalse$, i.e.\ it is \emph{undefined}. In other words, a pair of two-valued interpretations is equivalent to a \emph{three-valued interpretation} sending each atom to either $\mtrue$, $\mfalse$, or $\mundef$. Depending on the application, a suitable semantic operator, called an \emph{approximator}, is defined on the set of three-valued interpretations, and then the machinery of AFT provides a family of well-known semantics and desired properties straightforwardly and clearly. It turns out that consistent AFT is suitable to define the stable model semantics for DatalogMTL$^\neg$, showcasing once again the power and versatility of the AFT formalism. 

In this paper, we consider the set of pairs of interpretations already considered in \cite{WalegaCGK24}, define an operator on it, and prove that such operator is a consistent approximator. Then, AFT allows us to obtain not only the \emph{stable} model semantics for DatalogMTL$^\neg$, but also the \emph{well-founded}, the \emph{Kripke-Kleene}, and the \emph{supported} model semantics as well. Finally, we prove that the stable models obtained via the usage of AFT are the same as the stable models defined in \cite{WalegaCGK24}.
The advantages of employing the framework of AFT may be summarized as follows:

\begin{enumerate}
	\item \emph{Effortless.} The only thing required, is defining the right \emph{approximator}; then the machinery of AFT lays out the stable, the well-founded, the Kripke-Kleene, and the supported model semantics for free.
	For most applications, the choice of an approximator is very natural;  for DatalogMTL$^\neg$, it is no exception. The approximator, in fact, closely resembles Fitting's three-valued immediate consequence operator \cite{Fitting02}, with some slight adjustments to account for time points and datasets.
	\item \emph{Solid mathematical background.} AFT provides a broad range of properties and general results that hold for all its applications; hence, there is no need to re-prove several statements when switching settings. For instance, the minimality and uniqueness result (Theorem 3.4 in  \cite{WalegaCGK24}) leading to the definition of stable models for DatalogMTL$^\neg$ in \cite{WalegaCGK24}, is already guaranteed by the AFT framework. 
	\item \emph{Reliable.} The strong mathematical foundations of AFT make the obtained results more reliable. Even when semantics have already been defined, as in the case of the stable model semantics for DatalogMTL$^\neg$, AFT constitutes a valuable tool to screen the body of previous work for errors.  For example, the employment of AFT in the context of Abstract Dialectical Frameworks allowed the detection of some bugs in the original semantics \cite{Bogaerts19}.
\end{enumerate}

The outline of the paper is as follows. First, we provide some preliminary material on both AFT (Section \ref{sec:prel:AFT}) and the syntax and semantics of DatalogMTL$^\neg$ (Section \ref{sec:prel:datalog}). In Section \ref{sec:threeval}, we define the base notions we need to apply AFT: the three-valued semantics, and the immediate consequence operators (two- and three-valued). Moreover, we prove there exists a correspondence between the pre-fixpoints of these  operators, and the two-valued and three-valued models. In Section \ref{sec:AFTdatalog}, we are finally able to apply AFT; we just need to prove that the three-valued immediate consequence operator defined in the previous section is indeed a consistent approximator. Then, we are able to define the Kripke-Kleene, stable, well-founded, and supported models via AFT. Finally, in Section \ref{sec:comparison}, we compare our notions of three-valued models and two-valued stable models, with the concepts from \cite{WalegaCGK24}, and we show that the stable models obtained via AFT are the same as the ones from \cite{WalegaCGK24}.

\section{Preliminaries}

In this section, we first present the concepts that lays the foundations of AFT (Section \ref{sec:prel:AFT}). Then, we introduce the basic definitions and notation relative to timelines, the syntax of DatalogMTL$^\neg$, and the semantics of rule bodies (Section \ref{sec:prel:datalog}), as they were presented in \cite{WalegaCGK24}.

\subsection{AFT}\label{sec:prel:AFT}

AFT generalizes Tarki's theory to non-monotonic operators \cite{Tarski55}, with the initial goal of studying the semantics of non-monotonic logics. As such, AFT heavily relies on the following notions from order theory.

A \emph{partially ordered set} (poset) $P$ is a set equipped with a partial order $\leq_P$, \ie a reflexive, antisymmetric, transitive relation. We often denote a poset by $\langle P, \leq_P \rangle$, where $P$ is the underlying set, and $\leq_P$ the partial order. 
Given a subset $S\subseteq P$, a lower bound $l$ of $S$ is the \emph{greatest lower bound of $S$}, denoted by $\glb S$, if it is greater than any other lower bound of $S$. Analogously, an upper bound $u$ of $S$ is the \emph{least upper bound of $S$}, denoted by $\lub S$, 
if it is lower than any other upper bound of $S$. 
A \emph{chain complete poset} (cpo)  is a poset $C$ such that for every chain $S\subseteq C$, \ie a totally ordered subset, $\lub S$ exists. A \emph{complete lattice} is a poset $L$ such that for every subset $S\subseteq L$, both $\lub S$ and $\glb S$ exist. 
A function $f\colon P_1\to P_2$ between posets is \emph{monotone} if for all $x,y\in P_1$ such that $x\leq_{P_1}y$, it holds that $f(x)\leq_{P_2}f(y)$. 
We refer to functions $O\colon C\to C$ with domain equal to the codomain as \emph{operators}. An element $x\in C$ is a \emph{fixpoint} (resp.\ \emph{pre-fixpoint})
of $O$ if $O(x)=x$ (resp.\ $O(x)\leq_C x$). 
By Tarski's least fixpoint theorem, every monotone operator $O$ on a cpo has a least fixpoint, denoted $\lfp(O)$. To use a similar principle for operators stemming from non-monotonic logics, consistent AFT \cite{DeneckerMT04} considers, for each complete lattice $L$, its associated \emph{approximation space} $\langle L^c, \leqp\rangle$, where $L^c=\{(x,y)\mid x,y\in L, x\leq_L y\}\subseteq L^2$, and $\leqp$ is the \emph{precision order} on the Cartesian product $L^2$, \ie  $(x_1,y_1)\leqp (x_2,y_2)$ iff $x_1\leq_L x_2$ and $y_2\leq_L y_1$. 
 $L^c$ can be viewed as an approximation of $L$: an element $(x,y)\in L^c$ ``approximates'' all the values $z\in L$ such that $x\leq_L z\leq_L y$. Pairs of the form $(x,x)\in L^2$ are called \emph{exact}, since they approximate only one element of $L$. 

An operator $A\colon L^c\to L^c$ is a  \emph{consistent approximator  of an operator} $O\colon L\to L$ if it is monotone and it behaves as $O$ on exact pairs, i.e.\  $A(x,x)=(O(x),O(x))$ for all $x\in L$. We denote by $A_1$ and $A_2$ the first and second component of $A$, respectively, i.e.\ $A(x,y)=(A_1(x,y),A_2(x,y))$ for all $(x,y)\in L^c$. Since $A$ is by definition monotone, by Tarski's theorem $A$ has a least fixpoint, which is called the \emph{Kripke-Kleene} fixpoint. Moreover, given an approximator $A$, it is easy to define three additional kinds of fixpoints of interests: \emph{supported}, \emph{stable}, and \emph{well-founded} fixpoints. This will be done in detail at the end of Section \ref{sec:AFTdatalog}, in Definition \ref{def:models}.
	If $A$ is Fitting's three-valued immediate consequence operator \cite{Fitting02}
	, then the aformentioned four types of fixpoint correspond to the homonymous semantics of logic programming \cite{DeneckerMT00,DeneckerBV12}. 

\subsection{DatalogMTL$^\neg$}\label{sec:prel:datalog}

A \emph{timeline} $\timeline$ is either the ring of integers $\langle \mathbb{Z}, +\rangle$, or the ring of rationals $\langle \mathbb{Q}, +\rangle$, with the order given by the addition. A \emph{($\timeline$-)interval} $\interval$ is a non-empty subset of $\timeline$ such that the following conditions are satisfied:
\begin{enumerate}
	\item $\interval$ is \emph{convex}, \ie for all $t_1, t_2\in \interval$, for all $t\in \timeline$ such that $t_1<t<t_2$, it holds that $t\in \interval$;
	\item $\glb \interval, \lub \interval \in \timeline\cup \{+\infty, -\infty\}$.
\end{enumerate}
We call an interval $\interval$ \emph{non-negative} if for all $t\in \interval$, it holds that $t\geq 0$. We represent an interval $\interval$ with the standard notation $\langle \glb \interval, \lub \interval\rangle$, where the left bracket $\langle$ can either be $[$, if $\glb\interval\in\interval$, or $($ otherwise; and the right bracket $\rangle$ can either be $]$, if $\lub\interval\in\interval$, or $)$ otherwise. For any $t\in \timeline$, we denote the interval $[t,t]$ by simply $t$.

We now consider a function-free first-order vocabulary and a timeline $\timeline$. For any predicate $P$, and tuple $\mathbf{s}$ of constants and variables of the same arity of $P$, we call $P(\mathbf{s})$ a \emph{relational atom}. A \emph{metric atom} is an expression given by 
\begin{equation*}
	M ::= \top \mid \bot \mid P(\mathbf{s}) \mid \diamondminus_\interval M \mid \diamondplus_\interval M \mid \boxminus_\interval M \mid \boxplus_\interval M \mid M \mathcal{S}_\interval M \mid M \mathcal{U}_\interval M,
\end{equation*}
where $P(\mathbf{s}) $ ranges over relational atoms, and $\interval$ ranges over non-negative intervals. A metric atom is \emph{ground} if it does not contain any variable. A \emph{head atom} is a metric atom  specified by the grammar:

\begin{equation} \label{eq:atomhead}
	M ::= \top \mid P(\mathbf{s}) \mid  \boxminus_\interval M \mid \boxplus_\interval M,
\end{equation}
where $P(\mathbf{s}) $ ranges over relational atom, and $\interval$ ranges over non-negative intervals. We denote by $\headatom$ the set of ground head atoms. A \emph{metric fact} is an expression $M@\interval$, where $M$ is a ground metric atom, and $\interval$ is an interval. A metric fact $M@\interval$ is also a  \emph{relational fact} if $M$ is a ground relational atom. A \emph{dataset} is a finite set of relational facts.
	A \emph{(two-valued) interpretation} $I$ is a function which assigns to each $t\in \timeline$ a set $I(t)$ of ground relational atoms.
	A two-valued interpretation $I$ is a \emph{(two-valued) model of a dataset} $\db$ if for all relational facts $P(\mathbf{s})@\interval\in\db$, and for all $t\in\interval$, it holds that $P(\mathbf{s})\in I(t)$. 
 Notice that, interpretations can be ordered with the subset relation, \ie $I\subseteq J$ if and only if for all $t\in\timeline$, $I(t)\subseteq J(t)$. We denote by $\interpret$ the set of all two-valued interpretations.
	A \emph{rule} $r$ is an expression of the form 
	\begin{equation}\label{eq:rule}
		M \lrule M_1\wedge\ldots \wedge M_k\wedge \negation M_{k+1}\wedge \ldots \wedge \negation M_m, 
	\end{equation}
where $m\geq k\geq 0$, each  $M_i$ is a metric atom, and $M$ is a head atom. 
 We call the atoms $M_1, \ldots, M_k$ \emph{positive}, and $M_{k+1}, \ldots, M_m$ \emph{negated}. The conjunction of positive and negated atoms of a rule is called its \emph{body}, whereas the consequent $M$ is its \emph{head}. For brevity reasons, we will often denote a rule as $M\lrule B$, where $B$ represents the conjunction in the body. 
 Notice that rules of the form $\bot\lrule B$ are allowed in \cite{WalegaCGK24}, but not in this paper. We will instead use rules of the form $P\lrule B\wedge \negation P$ to express that $B$ must be false in every model.  A rule is \emph{safe} if each variable it mentions in the head occurs in some positive body
atom in a position other than a left operand of $\cal{S}$ or $\cal{U}$. A \emph{program} is a finite set of safe rules. A rule is \emph{ground} if it has no variables. For a program $\prog$, we denote by $\ground{\prog}$ the set of all the ground rules that can possibly be obtained from rules in $\prog$ by replacing variables with constants.

We conclude the preliminaries with the definitions of the \emph{two-valued semantics} of DatalogMTL$^\neg$, and the \emph{two valued models} of a dataset and a program.

\begin{definition}\label{def:twovaluedsemantics}
	Let $\prog$ be a program, $I$ an interpretation, and $t\in\timeline$. The \emph{two-valued semantics} of ground metric atoms and  bodies is defined as follows:
	\begin{enumerate}
		\item $\semtwo{\top}{I}{t}=\mtrue$,
		\item $\semtwo{\bot}{I}{t}=\mfalse$,
		\item $\semtwo{P(\mathbf{s})}{I}{t}=\begin{cases}
			\mtrue, & \text{if }P(\mathbf{s})\in I(t)
			\\ \mfalse, & \text{otherwise}
		\end{cases}$,
						\item $\semtwo{\diamondminus_\interval M}{I}{t}= \lub_{\leqtr}
					\{\semtwo{M}{I}{t'} \mid t'\in\timeline \text{ such that } t-t'\in \interval\}$,
					\item $\semtwo{\diamondplus_\interval M}{I}{t}= \lub_{\leqtr}
					\{\semtwo{M}{I}{t'} \mid t'\in\timeline \text{ such that } t'-t\in \interval\}$,
					\item $\semtwo{\boxminus_\interval M}{I}{t}=\glb_{\leqtr}
					\{\semtwo{M}{I}{t'} \mid t'\in\timeline \text{ such that } t-t'\in \interval\}$,
					\item $\semtwo{\boxplus_\interval M}{I}{t}= \glb_{\leqtr}
					\{\semtwo{M}{I}{t'} \mid t'\in\timeline \text{ such that } t'-t\in \interval\}$,
					\item $\semtwo{M_1 \mathcal{S}_\interval M_2}{I}{t}=\begin{cases}
						\mtrue, & \text{if } \semtwo{M_2}{I}{t'}=\mtrue \text{ for some }t' \text{ with } t-t'\in \interval 
						\\ & \text{ and } \semtwo{M_1}{I}{t''}=\mtrue \text{ for all }t''\in(t', t)
						\\ \mfalse, & \text{otherwise}
					\end{cases}$,
					\item $\semtwo{M_1 \mathcal{U}_\interval M_2}{I}{t}=\begin{cases}
						\mtrue, & \text{if } \semtwo{M_2}{I}{t'}=\mtrue \text{ for some }t' \text{ with } t'-t\in \interval 
						\\ & \text{ and } \semtwo{M_1}{I}{t''}=\mtrue \text{ for all }t''\in(t, t')
						\\ \mfalse, & \text{otherwise}
					\end{cases}$,
					\item $\semtwo{\negation M}{I}{t}=\semtwo{M}{I}{t}^{-1}$,
					\item $\semtwo{M_1\wedge \ldots \wedge M_m}{I}{t}=\glb_{\leqtr}\{\semtwo{M_1}{I}{t}, \ldots,\semtwo{M_m}{I}{t}\}$,
				\end{enumerate}
				where we define $\mfalse^{-1}=\mtrue$ and $\mtrue^{-1}=\mfalse$, and $\leqtr$ is the \emph{truth order}, i.e.\ the order defined by $\mfalse \leqtr \mtrue$. 
			\end{definition}

			\begin{definition}
				Let $\prog$ be a program, $\db$ a dataset, and $I\in \interpret$ an interpretation. Then, $I$ is a \emph{(two-valued) model} of $\prog$ iff for every rule $M \lrule B\in\ground{\prog}$, and for every $t\in\timeline$, $\semtwo{B}{I}{t}\leqtr\semtwo{M}{I}{t}$. $I$ is a \emph{(two-valued) model} of $\prog$ and $\db$ iff $I$ is a two-valued model of $\prog$ and a model of $\db$.
			\end{definition}

\section{The Three-valued Semantics of DatalogMTL$^\neg$ and the Immediate Consequence Operators}\label{sec:threeval}

In order to apply the machinery of AFT, we just need to define an approximation space and a suitable approximator. This turns out to be fairly straightforward: the approximation space is the space of \emph{three-valued interpretations}, and the approximator is an adaptation of Fitting's \emph{three valued immediate consequence operator} \cite{Fitting02}. 

In this section, we first define the three-valued interpretations and the three-valued semantics of rule bodies. Then, we  define the two-valued, and the three-valued immediate consequence operators, and we show they characterize the two-valued, and three valued models, respectively.

Defining the three-valued semantics of rule bodies can easily be done by using three-valued interpretations. A \emph{three-valued interpretation} $\mathcal{I}$ associates to each $t\in\timeline$ a function $\mathcal{I}(t)$ taking a ground relational atom and returning either $\mtrue$, $\mfalse$, or $\mundef$. Notice that three-valued interpretations may also be regarded as pairs of two-valued interpretations $(I_1,I_2)$, where $I_1, I_2\in\interpret$ such that $I_1\subseteq I_2$: an atom in $I_1(t)$ is interpreted as $\mtrue$, an atom in $I_2(t)$ but not in $I_1(t)$ as $\mundef$, and an atom not in $I_2(t)$ as $\mfalse$ (at time $t$). 
Notice that we have the following correspondence of truth values between the three-valued approach and the ``pairs'' approach: $\mtrue \equalhat(\mtrue, \mtrue) $, $\mundef \equalhat(\mfalse, \mtrue)$, and $\mfalse\equalhat(\mfalse, \mfalse)$. 
For the sake of clarity, we will use caligraphic fonts to differentiate three-valued interpretations from two-valued ones. For a three-valued interpretation $\mathcal{I}$, we will use the notation $I_1$ and $I_2$ for the two two-valued interpretations forming the pair $\mathcal{I}$ corresponds to., \ie $\mathcal{I}=(I_1,I_2)$. We will denote by $\intercons$ the set of three-valued interpretations, following the notation introduced in Section \ref{sec:prel:AFT}. We can now define the three-valued semantics  straightforwardly. Notice that we employ the same notation used for the two-valued semantics in Definition \ref{def:twovaluedsemantics}; the two semantics are easy to distinguish by looking at the interpretation (two- or three-valued) in the argument.

\begin{definition}\label{def:threevaluedsemantics}
	Let $\prog$ be a program, $\mathcal{I}$ a three-valued interpretation, and $t\in\timeline$. The \emph{three-valued semantics} of ground metric atoms and bodies is defined as follows:

		\begin{enumerate}
			\item $\semcal{M}{I}{t}=(\semtwo{M}{I_1}{t}, \semtwo{M}{I_2}{t})$, 
			\item $\semcal{\negation M}{I}{t}=\semcal{ M}{I}{t}^{-1}$,
			\item $\semcal{E_1\wedge \ldots \wedge E_m}{I}{t}= \glb_{\leqtr}\{\semcal{E_1}{I}{t}, \ldots, \semcal{ E_m}{I}{t}\}$,
				\end{enumerate}
		where we define $\mfalse^{-1}=\mtrue$, $\mundef^{-1}=\mundef$, and $\mtrue^{-1}=\mfalse$, and $\leqtr$ is the (three-valued) \emph{truth order}, i.e.\ the order defined by $\mfalse \leqtr \mundef \leqtr \mtrue$. 
	\end{definition}
	
	\begin{remark}\label{remark:threevaluedsemantics}
		Notice that in Definition \ref{def:threevaluedsemantics} it holds that 
		\begin{equation*}
			\semcal{E_1\wedge \ldots \wedge E_m}{I}{t}= \glb_{\leqtr}\{\semcal{E_1}{I}{t}, \ldots, \semcal{ E_m}{I}{t}\}=
			(\semtwo{E_1\wedge \ldots \wedge E_m}{I_1}{t},\semtwo{E_1\wedge \ldots \wedge E_m}{I_2}{t}).
		\end{equation*}
	\end{remark}

Analogously to what we did in Section \ref{sec:prel:datalog}, we naturally obtain the definition of \emph{three-valued model}.

\begin{definition}\label{def:threevalmodel}
	Let $\prog$ be a program, $\db$ a dataset, and $\mathcal{I}\in \intercons$ a three-valued interpretation. Then, $\mathcal{I}$ is a \emph{(three-valued) model} of $\prog$ iff for every rule $M \lrule B\in\ground{\prog}$, and for every $t\in\timeline$, $\semcal{B}{I}{t}\leqtr\semcal{M}{I}{t}$. $\mathcal{I}$ is a \emph{(three-valued) model} of $\prog$ and $\db$ iff $\mathcal{I}$ is a three-valued model of $\prog$ and $I_1$ is a model of $\db$.
\end{definition}

 Before defining the two-valued and the three-valued immediate consequence operators, we build a function $F$ to make explicit the link between a ground head atom and the time points it refers to.  Recall that we allow the nesting of $\boxplus$ and $\boxminus$ operators in the head of rules.

For all $t\in\timeline$, and for all ground head atoms $M\in\headatom$, we define:

\begin{equation*}
	F(M,t):=\begin{cases}
		\bigcup_{t'\in \timeline\mid t-t'\in\interval}\{F(M', t')\} & \text{ if } M=\boxminus_{\interval}M'
		\\ 	\bigcup_{t'\in \timeline\mid t'-t\in\interval}\{F(M', t')\} & \text{ if } M=\boxplus_{\interval}M'
		\\ \{(M,t)\} & \text{ otherwise}
	\end{cases}.
\end{equation*}

 From the definitions of the two-valued and the three-valued semantics, and the construction of $F$, it is easy to see the relation between  a ground head atom $M$ at time $t$ and  the relational facts it refers to, as expressed more precisely in the following propositions.

\begin{proposition}\label{prop:propertyF}
	Let  $M\in\headatom$, $\mathcal{I}\in \intercons$, and $t \in \timeline$. The following statements hold:
	\begin{enumerate}
		\item\label{item:two-valecase} $\semtwo{M}{I}{t}=\mtrue$ if and only if $\semtwo{P(\mathbf{s})}{I}{t'}=\mtrue$ for all $(P(\mathbf{s}), t')\in F(M,t)$.
		\item\label{item:three-valcase} $\semcal{M}{I}{t}=\glb_{\leqtr}\{\semcal{P(\mathbf{s})}{I}{t'}\mid(P(\mathbf{s}), t')\in F(M,t)\}$.
	\end{enumerate}
\end{proposition}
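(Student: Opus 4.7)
My plan is to prove both items by a single structural induction on the ground head atom $M$, following the grammar in \eqref{eq:atomhead}. The base cases $M=\top$ and $M=P(\mathbf{s})$ are immediate: $F(M,t)$ is the singleton $\{(M,t)\}$, and since $\semtwo{\top}{I}{t}=\mtrue$ by clause (1) of Definition \ref{def:twovaluedsemantics}, both (\ref{item:two-valecase}) and (\ref{item:three-valcase}) reduce to tautologies (a glb over a one-element set being that element).

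For the inductive step I focus on $M=\boxminus_\interval M'$, the case $M=\boxplus_\interval M'$ being completely symmetric. For (\ref{item:two-valecase}), clause (6) of Definition \ref{def:twovaluedsemantics} tells me that $\semtwo{\boxminus_\interval M'}{I}{t}=\mtrue$ iff $\semtwo{M'}{I}{t'}=\mtrue$ for every $t'\in\timeline$ with $t-t'\in\interval$. Applying the induction hypothesis for each such $t'$ converts this into: $\semtwo{P(\mathbf{s})}{I}{t''}=\mtrue$ for every $(P(\mathbf{s}),t'')\in F(M',t')$ and every $t'$ with $t-t'\in\interval$. By the definition of $F$, the union $\bigcup_{t-t'\in\interval}F(M',t')$ is exactly $F(\boxminus_\interval M',t)$, which closes the argument.

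For (\ref{item:three-valcase}), I first observe that, since $\intercons$ contains only consistent pairs and the three-valued truth order $\leqtr$ agrees with the componentwise order induced by $\equalhat$, the operator $\glb_{\leqtr}$ over any family of three-valued values is computed componentwise on the pair representation. Combined with clause (1) of Definition \ref{def:threevaluedsemantics} and the two-valued glb-characterization of $\boxminus$ in clause (6) of Definition \ref{def:twovaluedsemantics}, this gives $\semcal{\boxminus_\interval M'}{I}{t}=\glb_{\leqtr}\{\semcal{M'}{I}{t'}\mid t-t'\in\interval\}$; plugging in the induction hypothesis and using associativity of $\glb_{\leqtr}$ collapses the nested glbs into a single glb over $F(\boxminus_\interval M',t)$. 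The main obstacle I anticipate is not the induction itself but the bookkeeping needed to justify this componentwise computation of $\glb_{\leqtr}$ and its interaction with the pair notation $\mathcal{I}=(I_1,I_2)$; once that identification is made explicit, the rest of the proof is mechanical.
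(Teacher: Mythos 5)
Your proof is correct and follows essentially the same route as the paper, which treats Item \ref{item:two-valecase} as immediate from the definition of $F$ (your induction just spells that out) and obtains Item \ref{item:three-valcase} from the pair decomposition $\semcal{M}{I}{t}=(\semtwo{M}{I_1}{t},\semtwo{M}{I_2}{t})$ together with the componentwise computation of $\glb_{\leqtr}$, exactly the identification you make explicit. No gaps to report.
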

\begin{proof}
	Item \ref{item:two-valecase} is obvious by the definition of $F$. For Item \ref{item:three-valcase}, by Definition \ref{def:threevaluedsemantics}, we can write the following equalities: $\semcal{M}{I}{t}=(\semtwo{M}{I_1}{t}, \semtwo{M}{I_2}{t})$, and $\semcal{P(\mathbf{s})}{I}{t'}=(\semtwo{P(\mathbf{s})}{I_1}{t'},\semtwo{P(\mathbf{s})}{I_2}{t'})$ for all $(P(\mathbf{s}), t')\in F(M,t)$. Then Item \ref{item:three-valcase} follows easily from Item \ref{item:two-valecase}.
\end{proof}

Now, the definition of the two-valued, and three-valued immediate consequence operators for a database $\db$ and a program $\prog$ naturally follow.

\begin{definition}\label{def:immediateconsequenceop}
	Let $\prog$ be a program, and $\db$ be a dataset. The mapping $\immed{\db}{\prog}\colon\interpret\to\interpret$ is called the \emph{immediate consequence operator for} $\db$ \emph{and}  $\prog$, and it is defined for all $t\in \timeline$ as follows
	\begin{equation*}
		\begin{split}
				\immed{\db}{\prog}(I)(t):=&\{P(\mathbf{s})\mid \exists \interval \text{ such that } P(\mathbf{s})@\interval\in \db \text{ and } t\in\interval\}\\ &\cup \{P(\mathbf{s})\mid \exists M\!\lrule B\!\in\! \ground{\prog}, \exists t' \!\!\in\!\!\timeline\text{ such that } \semtwo{B}{I}{t'}=\mtrue \text{ and } (P(\mathbf{s}), t)\!\in\! F(M, t') \}
		\end{split}
	\end{equation*}
\end{definition}

\begin{definition} \label{def:threevalimmediate}
	Let $\prog$ be a program, and $\db$ be a dataset. The mapping $\appr{\db}{\prog}\colon\intercons\to\intercons$ is called the \emph{three-valued immediate consequence operator for} $\db$ \emph{and}  $\prog$, 
	and its components $\appr{\db}{\prog}^1\colon\intercons\to\interpret$, and $ \appr{\db}{\prog}^2\colon\intercons\to\interpret$ are defined for all $t\in \timeline$ as follows
	\begin{equation*}
		\begin{split}
			&	\begin{split}
				\appr{\db}{\prog}^1(\mathcal{I})(t):=&\{P(\mathbf{s})\mid \exists \interval \text{ such that } P(\mathbf{s})@\interval\in \db \text{ and } t\in\interval\}\\ 
				&\cup\! \{P(\mathbf{s})\mid \exists M\!\lrule\! B\!\in\! \ground{\prog}, \exists t'\! \!\!\in\!\!\timeline\text{ such that } \semcal{B}{I}{t'}\!=\!\mtrue \text{ and } (\!P(\!\mathbf{s}\!)\!,\! t\!)\!\in\! F(\!M\!,\! t'\!)\! \},
			\end{split}
			\\	
			& \begin{split}
				\appr{\db}{\prog}^2(\mathcal{I})(t):=&\{P(\mathbf{s})\mid \exists \interval \text{ such that } P(\mathbf{s})@\interval\in \db \text{ and } t\in\interval\}\\ 
				&\cup\! \{P(\mathbf{s})\mid \exists M\!\lrule\! B\!\in\! \ground{\prog}, \exists t'\!\!\! \!\in\!\!\timeline \text{ such that } \semcal{B}{I}{t'}\!\not=\!\mfalse \text{ and } (\!P(\!\mathbf{s}\!)\!,\! t\!)\!\in\!\! F(\!M\!,\! t'\!)\! \},
			\end{split}
		\end{split}
	\end{equation*}
\end{definition}

Finally, as  expected, $\immed{\db}{\prog}$ and $\appr{\db}{\Pi}$ characterize the two-valued, and the three-valued models, respectively.

\begin{proposition}
	Let $\prog$ be a program, $\db$ a database, and $I\in\interpret$. The interpretation $I$ is a two-valued model of $\db$ and $\prog$ if and only if $I$ is a pre-fixpoint of $\immed{\db}{\prog}$.
\end{proposition}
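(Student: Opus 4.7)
The plan is to prove the two directions of the biconditional by unfolding the definition of $\immed{\db}{\prog}$ and applying Proposition \ref{prop:propertyF}.\ref{item:two-valecase} to handle the nesting of $\boxminus$ and $\boxplus$ operators in rule heads. Recall that $I$ being a pre-fixpoint means $\immed{\db}{\prog}(I)(t)\subseteq I(t)$ for all $t\in\timeline$, and that the defining union for $\immed{\db}{\prog}(I)(t)$ has two parts: a dataset part and a rule part.

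For the forward direction, assume $I$ is a two-valued model of $\db$ and $\prog$, and pick any $P(\mathbf{s})\in\immed{\db}{\prog}(I)(t)$. If $P(\mathbf{s})$ arises from the dataset part, then $P(\mathbf{s})@\interval\in\db$ with $t\in\interval$, so $P(\mathbf{s})\in I(t)$ by the model-of-$\db$ condition. If it arises from the rule part, there are $M\lrule B\in\ground{\prog}$ and $t'\in\timeline$ with $\semtwo{B}{I}{t'}=\mtrue$ and $(P(\mathbf{s}),t)\in F(M,t')$. The model-of-$\prog$ condition gives $\semtwo{B}{I}{t'}\leqtr\semtwo{M}{I}{t'}$, hence $\semtwo{M}{I}{t'}=\mtrue$. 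Proposition \ref{prop:propertyF}.\ref{item:two-valecase} then yields $\semtwo{P(\mathbf{s})}{I}{t}=\mtrue$, i.e., $P(\mathbf{s})\in I(t)$.

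For the converse, assume $\immed{\db}{\prog}(I)\subseteq I$. The dataset part of $\immed{\db}{\prog}(I)(t)$ is contained in $I(t)$, so $I$ is a model of $\db$ immediately. To show $I$ is a model of $\prog$, fix a ground rule $M\lrule B$ and $t'\in\timeline$; we need $\semtwo{B}{I}{t'}\leqtr\semtwo{M}{I}{t'}$. The only nontrivial case is $\semtwo{B}{I}{t'}=\mtrue$, in which case every $(P(\mathbf{s}),t)\in F(M,t')$ satisfies $P(\mathbf{s})\in\immed{\db}{\prog}(I)(t)\subseteq I(t)$, so $\semtwo{P(\mathbf{s})}{I}{t}=\mtrue$. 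Applying Proposition \ref{prop:propertyF}.\ref{item:two-valecase} again gives $\semtwo{M}{I}{t'}=\mtrue$, completing the proof.

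The argument is essentially a definition-chasing exercise, and no step is genuinely hard. The only place where something non-routine happens is the treatment of head atoms of the form $\boxminus_\interval M'$ or $\boxplus_\interval M'$, which is precisely what $F$ was designed to encapsulate; Proposition \ref{prop:propertyF} reduces this to a statement about the relational-atom leaves, so the proof proceeds uniformly regardless of how deeply the head is nested.
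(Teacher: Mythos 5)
Your proof is correct and follows essentially the same route as the paper's: both directions are handled by unfolding the two parts of $\immed{\db}{\prog}$ and invoking Proposition \ref{prop:propertyF} (Item \ref{item:two-valecase}) to pass between the head $M$ and the relational-atom leaves of $F(M,t)$. The only cosmetic difference is that the paper explicitly dispatches the trivial case $M=\top$ before applying the $F$-based argument in the converse direction, a detail your write-up silently subsumes but which does not affect correctness.
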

\begin{proof}
	Suppose that $I$ is a two-valued model of $\db$ and $\prog$. We have to show that for all $t\in\timeline$, $\immed{\db}{\prog}(I)(t)\subseteq I(t)$. Let $P(\mathbf{s})\in \immed{\db}{\prog}(I)(t)$. If there exists $\interval$ such that $t\in\interval$ and $P(\mathbf{s})@\interval\in \db$, then $P(\mathbf{s})\in I(t)$ because $I$ is, in particular, a model of $\db$. Let us now assume that such an interval $\interval$ does not exist. Then, it must be the case that there exists a ground rule $M\lrule B\in \ground{\prog}$ and a time point $t'\in\timeline$ such that $ \semtwo{B}{I}{t'}=\mtrue$ and $(P(\mathbf{s}), t)\in F(M, t')$. Since $I$ is a two-valued model of $\prog$, it holds that $\semtwo{B}{I}{t'}\leqtr\semtwo{M}{I}{t'}$; hence, it must hold that $\semtwo{M}{I}{t'}=\mtrue$. By Proposition \ref{prop:propertyF}, it follows that $\semtwo{P(\mathbf{s})}{I}{t}=\mtrue$, \ie $P(\mathbf{s})\in I(t)$, as desired.
	
	Conversely, assume that for all $t\in\timeline$, $\immed{\db}{\prog}(I)(t)\subseteq I(t)$. By definition of $\immed{\db}{\prog}$, for all ground relational facts $P(\mathbf{s})@\interval\in\db$, and for all $t\in\interval$, it holds that $P(\mathbf{s})\in \immed{\db}{\prog}(I)(t)\subseteq I(t)$. Hence, $I$ is a model of $\db$. We now prove that $I$ is also a two-valued model of $\prog$. Let $M\lrule B\in \ground{\prog}$, and $t\in\timeline$, and suppose $\semtwo{B}{I}{t}=\mtrue$. It suffices to show that $\semtwo{M}{I}{t}=\mtrue$. If $M=\top$, this is trivially verified. Suppose now that $M\neq \top$. Observe that, by definition of $\immed{\db}{\prog}$, for all $(P(\mathbf{s}),t')\in F(M,t)$, $P(\mathbf{s})\in \immed{\db}{\prog}(I)(t')\subseteq I(t')$. In particular, $\semtwo{P(\mathbf{s})}{I}{t'}=\mtrue$ for all $(P(\mathbf{s}),t')\in F(M,t)$. Hence, by Proposition \ref{prop:propertyF}, $\semtwo{M}{I}{t}=\mtrue$, as desired.
\end{proof}

\begin{proposition}\label{prop:threevaliffprefixpt}
	Let $\prog$ be a program, $\db$ a database, and $\mathcal{I}\in\intercons$. The interpretation $\mathcal{I}$ is a three-valued model of $\db$ and $\prog$ if and only if $\mathcal{I}$ is a pre-fixpoint of $\appr{\db}{\prog}$.
\end{proposition}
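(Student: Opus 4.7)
My plan is to adapt the proof of the preceding (two-valued) proposition, splitting the pre-fixpoint condition $\appr{\db}{\prog}(\mathcal{I}) \leqp \mathcal{I}$ into its two componentwise parts, $\appr{\db}{\prog}^1(\mathcal{I}) \subseteq I_1$ and $I_2 \subseteq \appr{\db}{\prog}^2(\mathcal{I})$, and verifying each part using item~\ref{item:three-valcase} of Proposition~\ref{prop:propertyF} as the bridge between a head atom $M$ at time $t'$ and the pairs $(P(\mathbf{s}),t) \in F(M,t')$ that make it up.

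For the forward direction, I take $\mathcal{I} = (I_1,I_2)$ to be a three-valued model. An atom $P(\mathbf{s}) \in \appr{\db}{\prog}^1(\mathcal{I})(t)$ is either a dataset atom (hence in $I_1(t)$ by the assumption that $I_1$ is a model of $\db$) or a head atom coming from a ground rule $M \lrule B$ with $\semcal{B}{\mathcal{I}}{t'} = \mtrue$; the three-valued model condition then forces $\semcal{M}{\mathcal{I}}{t'} = \mtrue$, and by Proposition~\ref{prop:propertyF}(\ref{item:three-valcase}) every $(P(\mathbf{s}),t) \in F(M,t')$ satisfies $\semcal{P(\mathbf{s})}{\mathcal{I}}{t} = \mtrue$, placing $P(\mathbf{s})$ in $I_1(t)$. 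The companion inclusion $I_2 \subseteq \appr{\db}{\prog}^2(\mathcal{I})$ is the symmetric statement at the ``$\neq \mfalse$'' level: for every rule whose body is not false at some $t'$ the head cannot be false either, so Proposition~\ref{prop:propertyF}(\ref{item:three-valcase}) again forces every relational witness $(P(\mathbf{s}),t) \in F(M,t')$ to be non-false under $\mathcal{I}$, i.e.\ to lie in $I_2(t)$.

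For the backward direction, I assume the pre-fixpoint condition. Dataset atoms are explicitly contained in $\appr{\db}{\prog}^1(\mathcal{I}) \subseteq I_1$ by construction, so $I_1$ is a model of $\db$. For the rule condition, I fix a ground rule $M \lrule B$ and a time $t$ and split on the value of $\semcal{B}{\mathcal{I}}{t}$: the $\mfalse$ case is immediate because $\mfalse$ is the minimum under $\leqtr$; in the $\mtrue$ case, every $(P(\mathbf{s}),t') \in F(M,t)$ lies in $\appr{\db}{\prog}^1(\mathcal{I})(t') \subseteq I_1(t')$, so $\semcal{P(\mathbf{s})}{\mathcal{I}}{t'} = \mtrue$ and Proposition~\ref{prop:propertyF}(\ref{item:three-valcase}) yields $\semcal{M}{\mathcal{I}}{t} = \mtrue$; the $\mundef$ case is the same argument with $\appr{\db}{\prog}^2$ and $I_2$ in place of $\appr{\db}{\prog}^1$ and $I_1$, delivering $\semcal{M}{\mathcal{I}}{t} \geqtr \mundef$.

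The hard part will be keeping the two components of $\appr{\db}{\prog}$ aligned with the correct coordinate of the three-valued truth order; concretely, one must verify that the glb characterization of a head atom's value under $\mathcal{I}$ (Proposition~\ref{prop:propertyF}(\ref{item:three-valcase})) interacts as expected with the sharp threshold $\semcal{B}{\mathcal{I}}{t'} \neq \mfalse$ used to define $\appr{\db}{\prog}^2$. Everything else is a straightforward three-valued analogue of the bookkeeping done in the two-valued case.
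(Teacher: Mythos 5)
There is a genuine gap, and it lies in the very first step: you read ``pre-fixpoint'' with respect to the precision order, i.e.\ $\appr{\db}{\prog}(\mathcal{I})\leqp\mathcal{I}$, which you decompose as $\appr{\db}{\prog}^1(\mathcal{I})\subseteq I_1$ and $I_2\subseteq\appr{\db}{\prog}^2(\mathcal{I})$. The proposition is about the pre-fixpoint in the (pointwise) \emph{truth} order on three-valued interpretations, $\appr{\db}{\prog}(\mathcal{I})\leqtr\mathcal{I}$, whose componentwise form is $\appr{\db}{\prog}^1(\mathcal{I})\subseteq I_1$ \emph{and} $\appr{\db}{\prog}^2(\mathcal{I})\subseteq I_2$; this is exactly what the paper's proof establishes (operator value $\mtrue$ forces $\mathcal{I}$-value $\mtrue$, operator value $\mundef$ forces $\mathcal{I}$-value $\neq\mfalse$). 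Under your $\leqp$ reading the statement is simply false: with an empty program and empty dataset, any $\mathcal{I}=(\emptyset,I_2)$ with $I_2\neq\emptyset$ is a three-valued model, yet $\appr{\db}{\prog}^2(\mathcal{I})=\emptyset$, so $I_2\subseteq\appr{\db}{\prog}^2(\mathcal{I})$ fails; the inclusion $I_2\subseteq\appr{\db}{\prog}^2(\mathcal{I})$ is a supportedness-type condition that models need not satisfy.

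Your sketch is also internally inconsistent with its own stated goal. In the forward direction, the ``companion inclusion'' argument you give (body not false at $t'$ implies head not false, hence every witness $(P(\mathbf{s}),t)\in F(M,t')$ lies in $I_2(t)$) proves $\appr{\db}{\prog}^2(\mathcal{I})\subseteq I_2$, not $I_2\subseteq\appr{\db}{\prog}^2(\mathcal{I})$. In the backward direction, the $\mundef$ case needs precisely $\appr{\db}{\prog}^2(\mathcal{I})(t')\subseteq I_2(t')$ to convert membership of the witnesses in $\appr{\db}{\prog}^2(\mathcal{I})$ into $\semcal{P(\mathbf{s})}{I}{t'}\geqtr\mundef$; the $\leqp$ hypothesis gives you the reverse inclusion and the step collapses. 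Once you replace the decomposition by the truth-order one, the rest of your outline (case split on the body value, dataset atoms handled by the first component, Proposition \ref{prop:propertyF}(\ref{item:three-valcase}) as the bridge between a head atom and its relational witnesses) coincides with the paper's proof.
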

\begin{proof}
		Suppose that $\mathcal{I}$ is a three-valued model of $\db$ and $\prog$. We have to show that $\appr{\db}{\prog}(\mathcal{I})\leqtr \mathcal{I}$. Let $P(\mathbf{s})$ be a ground relational atom such that $ \appr{\db}{\prog}(\mathcal{I})(t)(P(\mathbf{s}))=\mtrue$. Hence, by Definition \ref{def:threevalimmediate}, either there exists an interval $\interval$ such that $t\in\interval$ and $P(\mathbf{s})@\interval\in \db$, in which case $\mathcal{I}(P(\mathbf{s}))=\mtrue$ (as $\mathcal{I}$ is a model of $\db$ by hypothesis); or there exists a ground rule $M\lrule B\in \ground{\prog}$ and a time point $t'\in\timeline$ such that $ \semcal{B}{\mathcal{I}}{t'}=\mtrue$ and $(P(\mathbf{s}), t)\in F(M, t')$. Since $\mathcal{I}$ is a three-valued model of $\prog$, it holds that $\semcal{B}{\mathcal{I}}{t'}\leqtr\semcal{M}{\mathcal{I}}{t'}$; hence, it must hold that $\semcal{M}{\mathcal{I}}{t'}=\mtrue$. By Proposition \ref{prop:propertyF}, it follows that $\mathcal{I}(t)(P(\mathbf{s}))=(I^1(t)(P(\mathbf{s})),I^2(t)(P(\mathbf{s}))=(\semtwo{P(\mathbf{s})}{I^1}{t},\semtwo{P(\mathbf{s})}{I^2}{t})=\semcal{P(\mathbf{s})}{I}{t}=\mtrue$,  as desired. Now let $P(\mathbf{s})$ be a ground relational atom such that $ \appr{\db}{\prog}(\mathcal{I})(t)(P(\mathbf{s}))=\mundef$. Hence, by Definition \ref{def:threevalimmediate}, there exists a ground rule $M\lrule B\in \ground{\prog}$ and a time point $t'\in\timeline$ such that $ \semcal{B}{\mathcal{I}}{t'}\neq\mfalse$ and $(P(\mathbf{s}), t)\in F(M, t')$. Since $\mathcal{I}$ is a three-valued model of $\prog$, it holds that $\mundef\leqtr\semcal{B}{\mathcal{I}}{t'}\leqtr\semcal{M}{\mathcal{I}}{t'}$. In particular, we have that $\semcal{M}{\mathcal{I}}{t'}\neq\mfalse$. By Proposition \ref{prop:propertyF}, it follows that $\mathcal{I}(t)(P(\mathbf{s}))=\semcal{P(\mathbf{s})}{I}{t}=\mfalse$, concluding the first half of the proof.

	Conversely, assume that for all $t\in\timeline$, $\appr{\db}{\prog}(\mathcal{I})(t)\leqtr \mathcal{I}(t)$. By definition of $\appr{\db}{\prog}$, for all ground relational facts $P(\mathbf{s})@\interval\in\db$, and for all $t\in\interval$, it holds that $ \mtrue=\appr{\db}{\prog}(\mathcal{I})(t)(P(\mathbf{s}))\leqtr \mathcal{I}(t)(P(\mathbf{s}))$. Hence, $\mathcal{I}$ is a model of $\db$. We now prove that $\mathcal{I}$ is also a three-valued model of $\prog$. Let $M\lrule B\in \ground{\prog}$, and $t\in\timeline$. We have to show that $\semcal{B}{I}{t}\leqtr\semcal{M}{I}{t}$. If $M=\top$, this is trivially verified, so assume that $M\neq \top$. We first suppose $\semcal{B}{I}{t}=\mtrue$ and show that $\semcal{M}{I}{t}=\mtrue$. Observe that, by definition of $\appr{\db}{\prog}$, for all $(Q(\mathbf{s}),t')\in F(M,t)$, it holds that $ \mtrue=\appr{\db}{\prog}(\mathcal{I})(t')(Q(\mathbf{s}))\leqtr \mathcal{I}(t')(Q(\mathbf{s}))$. In particular, $\semcal{Q(\mathbf{s})}{I}{t'}=\mtrue$ for all $(Q(\mathbf{s}),t')\in F(M,t)$. Hence, by Proposition \ref{prop:propertyF}, $\semcal{M}{I}{t}=\mtrue$, as desired.
	Now, we assume $\semcal{B}{I}{t}=\mundef$ and show that $\semcal{M}{I}{t}\neq\mfalse$. Observe that, by definition of $\appr{\db}{\prog}$, for all $(Q(\mathbf{s}),t')\in F(M,t)$, it holds that $ \mundef\leqtr\appr{\db}{\prog}(\mathcal{I})(t')(Q(\mathbf{s}))\leqtr \mathcal{I}(t')(Q(\mathbf{s}))$. In particular, $\semcal{Q(\mathbf{s})}{I}{t'}\geqtr\mundef$ for all $(Q(\mathbf{s}),t')\in F(M,t)$. Hence, by Proposition \ref{prop:propertyF}, $\semcal{M}{I}{t}\geqtr\mundef$, concluding the proof. 
\end{proof}

\section{Approximation Fixpoint Theory for DatalogMTL$^\neg$}\label{sec:AFTdatalog}

After having defined the three-valued semantics and the three-valued immediate consequence operator in the previous section, it only remains to show that the three-valued immediate consequence operator $\appr{\db}{\prog}$ is a consistent approximator of $\immed{\db}{\prog}$ (Theorem \ref{thm:consistentapprox}). Then, the machinery of AFT allows for the definition of four kinds of models of a database $\db$ and a program $\prog$: \emph{supported}, \emph{stable}, \emph{Kripke-Kleene}, and \emph{well-founded}. 

 We start by proving that the space of two-valued interpretations $\interpret$ is a complete lattice, which follows straightforwardly from results of order theory.

\begin{proposition}\label{prop:lattice}
	$\langle \interpret, \subseteq \rangle$ is a complete lattice.
\end{proposition}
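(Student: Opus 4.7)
The plan is to exhibit the joins and meets of $\langle \interpret, \subseteq \rangle$ pointwise in $t$, reducing the statement to the well-known fact that for any set $X$, the powerset $\langle \mathcal{P}(X), \subseteq\rangle$ is a complete lattice, and that an arbitrary product of complete lattices is a complete lattice under the componentwise order.

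First I would note that a two-valued interpretation $I\in\interpret$ is, by Definition (Section \ref{sec:prel:datalog}), nothing more than an element of $\prod_{t\in\timeline}\mathcal{P}(\mathcal{A})$, where $\mathcal{A}$ denotes the set of ground relational atoms of the underlying vocabulary; and the order $\subseteq$ on $\interpret$ is precisely the componentwise subset order on this product. Reflexivity, antisymmetry and transitivity of $\subseteq$ on $\interpret$ are then immediate from the corresponding properties on each factor $\mathcal{P}(\mathcal{A})$, so $\langle \interpret, \subseteq\rangle$ is indeed a poset.

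Then, given an arbitrary family $\{I_\alpha\}_{\alpha\in A}\subseteq \interpret$, I would define two candidate interpretations $U$ and $L$ by
\begin{equation*}
U(t) \;:=\; \bigcup_{\alpha\in A} I_\alpha(t), \qquad L(t) \;:=\; \bigcap_{\alpha\in A} I_\alpha(t),
\end{equation*}
for every $t\in\timeline$, and verify that $U=\lub \{I_\alpha\}_{\alpha\in A}$ and $L=\glb \{I_\alpha\}_{\alpha\in A}$. The upper/lower bound property is immediate pointwise, and minimality/maximality among bounds follows because if $J\in\interpret$ satisfies $I_\alpha\subseteq J$ for all $\alpha$, then $I_\alpha(t)\subseteq J(t)$ for every $t$, hence $U(t)\subseteq J(t)$; the symmetric argument handles $L$. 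This covers all subsets, including the empty family (whose join is the interpretation constantly equal to $\emptyset$, and whose meet is the interpretation constantly equal to $\mathcal{A}$).

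There is no real obstacle here: the only thing to be careful about is to emphasize that the argument works for arbitrary — not just finite or chain — subsets of $\interpret$, so that the resulting structure is a complete lattice (rather than merely a cpo as required in the preliminaries on AFT). This fact will be used later when invoking Tarski's theorem on the approximation space $\intercons$ associated to $\interpret$.
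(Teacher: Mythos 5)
Your proposal is correct and takes essentially the same route as the paper: both identify $\interpret$ with the product $\prod_{t\in\timeline}2^{\mathcal{G}}$ of powerset lattices over the timeline and conclude that this product is a complete lattice under the componentwise order. The only difference is that you verify the pointwise joins and meets directly, whereas the paper delegates this to cited general results on products of complete lattices; your inline verification (including the empty family) is perfectly adequate.
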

\begin{proof}
	Let $\mathcal{G}$ be the set of ground relational atoms. Then $\interpret$ is the set of functions with signature $\timeline \to 2^\mathcal{G}$. The power set $2^\mathcal{G}$ ordered with set inclusion $\subseteq$ is clearly a complete lattice, with bottom element the empty set, and top element $\mathcal{G}$. Moreover, $\timeline$ can be considered as a poset with the standard order on integers or rationals. Since the category of complete lattices is a full subcategory of the category of posets, by Proposition 2 in \cite{PollaciKDB2025}, we have the isomorphism $\interpret\cong \Pi_{t\in\timeline}2^\mathcal{G}$. By Proposition 1 in \cite{PollaciKDB2025}, $\Pi_{t\in\timeline}2^\mathcal{G}$ is a complete lattice, hence, $\interpret$ is a complete lattice too.
\end{proof}

Using the terminology of AFT, the set of three-valued interpretations $\intercons$ equipped with the precision order $\leq_p$ ($\mathcal{I}\leq_p \mathcal{J}$ iff $I_1\subseteq J_1\subseteq J_2\subseteq I_2$) is an \emph{approximation space} for $\interpret$. In order to prove Theorem \ref{thm:consistentapprox}, we make use of the following two lemmata, which ensure the evaluation of ground metric atoms preserves the order relation in both the two-valued and the three-valued case.

\begin{lemma}\label{lemma:twomonotonicity}
	Let $I_1, I_2\in \interpret$ be two-valued interpretations, and $M$ be a ground atom. If $I_1 \subseteq I_2$, then for all $t\in\timeline$, $\semtwo{M}{I_1}{t}\leqtr \semtwo{M}{I_2}{t}$. 
\end{lemma}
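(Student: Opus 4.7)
The plan is to proceed by structural induction on the ground metric atom $M$, following the grammar from Section \ref{sec:prel:datalog}. The essential observation enabling a clean induction is that no metric atom constructor involves negation, so every operator appearing in $M$ is order-preserving in its argument(s) with respect to $\leqtr$; therefore the whole semantic evaluation is a composition of monotone operations on the two-element chain $\{\mfalse, \mtrue\}$.

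For the base cases, $\semtwo{\top}{I}{t}$ and $\semtwo{\bot}{I}{t}$ are constant, so monotonicity is trivial. For $M = P(\mathbf{s})$, from $I_1 \subseteq I_2$ we immediately get $I_1(t) \subseteq I_2(t)$; if $P(\mathbf{s}) \in I_1(t)$ then also $P(\mathbf{s}) \in I_2(t)$, hence $\semtwo{P(\mathbf{s})}{I_1}{t} = \mtrue$ implies $\semtwo{P(\mathbf{s})}{I_2}{t} = \mtrue$, and otherwise there is nothing to prove since $\mfalse$ is the bottom of $\leqtr$.

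For the modal cases $\diamondminus_\interval M'$, $\diamondplus_\interval M'$, $\boxminus_\interval M'$, $\boxplus_\interval M'$, I would fix the relevant set of time points $S_t = \{t' \in \timeline \mid t - t' \in \interval\}$ (or the future-oriented analogue) and invoke the induction hypothesis pointwise to get $\semtwo{M'}{I_1}{t'} \leqtr \semtwo{M'}{I_2}{t'}$ for each $t' \in S_t$. Monotonicity then transfers to $\lub_{\leqtr}$ and $\glb_{\leqtr}$ taken over $S_t$, because both operators are monotone in each argument on a complete lattice, closing these cases.

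The cases of $M_1 \mathcal{S}_\interval M_2$ and $M_1 \mathcal{U}_\interval M_2$ will need the most care and constitute the main — though still routine — obstacle, since their semantics is defined by a mixed existential/universal witness condition. I would reason: if $\semtwo{M_1 \mathcal{S}_\interval M_2}{I_1}{t} = \mtrue$, then there exists $t'$ with $t - t' \in \interval$ and $\semtwo{M_2}{I_1}{t'} = \mtrue$, and for all $t'' \in (t',t)$, $\semtwo{M_1}{I_1}{t''} = \mtrue$; applying the induction hypothesis to both $M_1$ and $M_2$ preserves these witnesses under $I_2$, yielding $\semtwo{M_1 \mathcal{S}_\interval M_2}{I_2}{t} = \mtrue$. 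Otherwise the left-hand side is $\mfalse$ and the inequality is immediate. The $\mathcal{U}_\interval$ case is perfectly symmetric. This completes the induction.
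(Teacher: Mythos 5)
Your proposal is correct and follows essentially the same route as the paper's proof: structural induction on $M$, with the relational base case settled by $I_1(t)\subseteq I_2(t)$, and the $\mathcal{S}_\interval$/$\mathcal{U}_\interval$ cases handled by preserving the existential/universal witnesses under the induction hypothesis. Your treatment of the modal cases via monotonicity of $\lub_{\leqtr}$ and $\glb_{\leqtr}$ over the fixed set $S_t$ is just a slightly more uniform phrasing of the paper's witness-based case analysis and is equally valid.
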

\begin{proof}
	Let $I_1\subseteq I_2$, \ie $I_1(t)\subseteq I_2(t)$ for all $t\in\timeline$.
	We proceed by induction on the structure of $M$.
	
	If $M\in\{\bot, \top\}$, then clearly $\semtwo{M}{I_1}{t}=\semtwo{M}{I_2}{t}$. 
	If $M=P(\mathbf{s})$, then $\semtwo{M}{I_1}{t}\leqtr \semtwo{M}{I_2}{t}$ follows from Definition \ref{def:twovaluedsemantics} and the assumption that $I_1(t)\subseteq I_2(t)$.
	
	Now, let $M= \diamondminus_\interval M'$, and assume $\semtwo{M'}{I_1}{t}\leqtr \semtwo{M'}{I_2}{t}$ for all $t\in\timeline$. If  $\semtwo{M}{I_1}{t}=\mtrue$ for some $t\in\timeline$, then by Definition \ref{def:twovaluedsemantics}, there exists $t'\in\timeline$ such that $t-t'\in\interval$ for which $\semtwo{M'}{I_1}{t'}=\mtrue$. By induction hypothesis, it must hold that  $\semtwo{M'}{I_2}{t'}=\mtrue$, as desired. 
	We can proceed analogously for the cases $M= \diamondplus_\interval M'$, $M= \boxminus_\interval M'$, and $M= \boxplus_\interval M'$.
	
	Let $M= M_1 \mathcal{S}_\interval M_2$, and assume $\semtwo{M_1}{I_1}{t}\leqtr \semtwo{M_1}{I_2}{t}$, and $\semtwo{M_2}{I_1}{t}\leqtr \semtwo{M_2}{I_2}{t}$ for all $t\in\timeline$.
	If  $\semthree{M}{I_1}{J_1}{t}=\mtrue$ for some $t\in\timeline$, then by Definition \ref{def:twovaluedsemantics}, there exists $t'\in\timeline$ such that $t-t'\in\interval$,  $\semtwo{M_2}{I_1}{t'}=\mtrue$, and $\semtwo{M_1}{I_1}{t''}=\mtrue$ for all $t''\in(t',t)$. By induction hypothesis, it must hold that  $\semtwo{M_2}{I_2}{t'}=\mtrue$, and $\semtwo{M_1}{I_2}{t''}=\mtrue$ for all $t''\in(t',t)$. It follows that $\semtwo{M}{I_2}{t}=\mtrue$, as desired. 
	We can proceed analogously for the  case $M= M_1 \mathcal{U}_\interval M_2$.
\end{proof}

\begin{lemma}\label{lemma:monotonicity}
	Let $\mathcal{I}, \mathcal{J}\in \intercons$ be three-valued interpretations, and $E$ be a ground expression. If $\mathcal{I}\leq_p\mathcal{J}$, then for all $t\in\timeline$, $\semcal{E}{I}{t}\leq_p \semcal{E}{J}{t}$.
\end{lemma}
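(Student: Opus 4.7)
The plan is to proceed by structural induction on the ground expression $E$. Since a ground rule body is either a ground metric atom $M$, a negated ground metric atom $\negation M$, or a conjunction of such, three inductive cases cover everything.

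In the base case $E = M$, Definition \ref{def:threevaluedsemantics} gives $\semcal{M}{I}{t} = (\semtwo{M}{I_1}{t}, \semtwo{M}{I_2}{t})$ and $\semcal{M}{J}{t} = (\semtwo{M}{J_1}{t}, \semtwo{M}{J_2}{t})$. The hypothesis $\mathcal{I} \leqp \mathcal{J}$ unfolds to $I_1 \subseteq J_1 \subseteq J_2 \subseteq I_2$, so applying Lemma \ref{lemma:twomonotonicity} to $I_1 \subseteq J_1$ and to $J_2 \subseteq I_2$ yields $\semtwo{M}{I_1}{t} \leqtr \semtwo{M}{J_1}{t}$ and $\semtwo{M}{J_2}{t} \leqtr \semtwo{M}{I_2}{t}$, which together are precisely $\semcal{M}{I}{t} \leqp \semcal{M}{J}{t}$.

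For the negation case $E = \negation M$, the induction hypothesis gives $\semcal{M}{I}{t} \leqp \semcal{M}{J}{t}$, and Definition \ref{def:threevaluedsemantics} reduces the goal to showing that the three-valued inverse $(\cdot)^{-1}$ is $\leqp$-monotone. On $\{\mtrue, \mundef, \mfalse\}$ the only non-trivial instances of the precision order are $\mundef \leqp \mtrue$ and $\mundef \leqp \mfalse$; since $\mundef^{-1} = \mundef$, $\mtrue^{-1} = \mfalse$, and $\mfalse^{-1} = \mtrue$, each such instance is mapped to another one, so inversion preserves $\leqp$. For the conjunction case $E = E_1 \wedge \ldots \wedge E_m$, the induction hypothesis yields $\semcal{E_i}{I}{t} \leqp \semcal{E_i}{J}{t}$ for every $i$, and the conclusion follows once $\glb_{\leqtr}$ is seen to be $\leqp$-monotone on three-valued values. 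Using the pair correspondence $\mtrue \equalhat (\mtrue,\mtrue)$, $\mundef \equalhat (\mfalse,\mtrue)$, $\mfalse \equalhat (\mfalse,\mfalse)$, one checks case-by-case that $\glb_{\leqtr}$ acts component-wise; then the ordinary monotonicity of the two-valued $\min$ with respect to $\leqtr$ delivers the required inequality on first components and its reversal on second components.

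I do not expect a serious obstacle: Lemma \ref{lemma:twomonotonicity} carries the real work in the metric-atom case, and the remaining two cases reduce to an elementary verification that three-valued negation and three-valued glb preserve the precision order. The most delicate point is purely bookkeeping, namely keeping the roles of the two components of each pair straight, because $\leqp$ is defined by opposite inequalities on the first and on the second component, so each operation has to be checked to act correctly on both sides at once.
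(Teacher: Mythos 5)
Your proof is correct and follows essentially the same route as the paper: structural induction on $E$, with the metric-atom base case discharged by Lemma \ref{lemma:twomonotonicity}, the negation case by checking that $(\cdot)^{-1}$ preserves the precision order, and the conjunction case by $\leqp$-monotonicity of $\glb_{\leqtr}$ (your component-wise verification just spells out what the paper leaves as an easy application of the induction hypothesis, cf.\ Remark \ref{remark:threevaluedsemantics}).
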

\begin{proof}
	Assume that $\mathcal{I}\leq_p\mathcal{I}$, \ie $I_1(t)\subseteq J_1(t) \subseteq J_2(t)\subseteq I_2(t)$ for all $t\in\timeline$.
	We proceed by induction on the structure of $E$.
	
	If $E$ is  a ground metric atom, then by Lemma \ref{lemma:twomonotonicity}, we have $\semtwo{E}{I_1}{t}\leqtr \semtwo{E}{J_1}{t}\leqtr \semtwo{E}{J_2}{t}\leqtr \semtwo{E}{I_2}{t}$. By Definition \ref{def:threevaluedsemantics}, we obtain $\semcal{E}{I}{t}\leq_p \semcal{E}{J}{t}$.

	Now assume $E=\negation M$  for some ground metric atom $M$, and that $\semcal{M}{I}{t}\leq_p \semcal{M}{J}{t}$.
If $\semcal{M}{I}{t}= \semcal{M}{J}{t}$, then clearly $\semcal{E}{I}{t}= \semcal{E}{J}{t}$. Now assume $\semcal{M}{I}{t}<_p \semcal{M}{J}{t}$. It must be that $\semcal{M}{I}{t}=\mundef$ and $\semcal{M}{J}{t}\in\{\mtrue, \mfalse\}$.  By Definition \ref{def:threevaluedsemantics}, it clearly follows that $\semcal{E}{I}{t}=\mundef$ and $\semcal{E}{J}{t}\in\{\mtrue, \mfalse\}$; hence $\semcal{E}{I}{t}\leq_p \semcal{E}{J}{t}$.
	
	Finally, for the case $E=E_1\wedge \ldots \wedge E_m$, we can easily use the induction hypothesis on $E_1, \ldots, E_m$ to obtain that $\semcal{E}{I}{t}\leq_p \semcal{E}{J}{t}$.
\end{proof}

\begin{theorem}\label{thm:consistentapprox}
	Let $\prog$ be a program, and $\db$ be a dataset. The operator $\appr{\db}{\prog}$ is a consistent approximator of $\immed{\db}{\prog}$. 
\end{theorem}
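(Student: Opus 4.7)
The goal is to verify the two defining conditions of a consistent approximator: that $\appr{\db}{\prog}$ is $\leqp$-monotone on $\intercons$, and that $\appr{\db}{\prog}(I,I) = (\immed{\db}{\prog}(I), \immed{\db}{\prog}(I))$ for every $I\in\interpret$. Before either of these, I would briefly check that $\appr{\db}{\prog}$ actually lands in $\intercons$, i.e.\ that $\appr{\db}{\prog}^1(\mathcal{I}) \subseteq \appr{\db}{\prog}^2(\mathcal{I})$: any rule-body evaluation returning $\mtrue$ certainly returns a value different from $\mfalse$, and dataset facts contribute the same set to both components, so the inclusion is immediate from the definitions.

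For monotonicity, I would fix $\mathcal{I}\leqp\mathcal{J}$ and handle the two components separately, exploiting Lemma \ref{lemma:monotonicity} together with the elementary fact that in the three-valued precision order $\mtrue$ and $\mfalse$ are both maximal (only $\mundef$ sits strictly below them). For the first component, if $P(\mathbf{s})\in\appr{\db}{\prog}^1(\mathcal{I})(t)$ is contributed by a ground rule $M\lrule B$ and time $t'$ with $\semcal{B}{I}{t'}=\mtrue$ and $(P(\mathbf{s}),t)\in F(M,t')$, then Lemma \ref{lemma:monotonicity} gives $\semcal{B}{I}{t'}\leqp\semcal{B}{J}{t'}$, and maximality of $\mtrue$ forces $\semcal{B}{J}{t'}=\mtrue$, so $P(\mathbf{s})\in\appr{\db}{\prog}^1(\mathcal{J})(t)$. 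Facts from $\db$ are treated trivially. Symmetrically, for the second component we want $\appr{\db}{\prog}^2(\mathcal{J})\subseteq\appr{\db}{\prog}^2(\mathcal{I})$: if $\semcal{B}{J}{t'}\neq\mfalse$, then from $\semcal{B}{I}{t'}\leqp\semcal{B}{J}{t'}$ together with maximality of $\mfalse$ we conclude $\semcal{B}{I}{t'}\neq\mfalse$, so the required element lies in $\appr{\db}{\prog}^2(\mathcal{I})(t)$.

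For the exact-pair condition, I would take $\mathcal{I}=(I,I)$ and observe, via Definition \ref{def:threevaluedsemantics} and Remark \ref{remark:threevaluedsemantics}, that for every ground body $B$ and time $t'$ we have $\semcal{B}{I}{t'}=(\semtwo{B}{I}{t'},\semtwo{B}{I}{t'})$, whose two components are equal; hence this value is either $\mtrue$ or $\mfalse$ and never $\mundef$. Consequently the conditions ``$\semcal{B}{I}{t'}=\mtrue$'' and ``$\semcal{B}{I}{t'}\neq\mfalse$'' used in Definition \ref{def:threevalimmediate} both reduce to ``$\semtwo{B}{I}{t'}=\mtrue$'', which is exactly the condition appearing in Definition \ref{def:immediateconsequenceop}. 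Matching the dataset clauses verbatim, both components of $\appr{\db}{\prog}(I,I)$ coincide with $\immed{\db}{\prog}(I)$.

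The only step requiring real care is the monotonicity argument, and the main conceptual hinge there is the asymmetry between the truth order $\leqtr$ (used inside Lemma \ref{lemma:twomonotonicity}) and the precision order $\leqp$ (used in Lemma \ref{lemma:monotonicity} and in the definition of approximator). The rest is bookkeeping over the cases in Definition \ref{def:threevalimmediate}; no new inductive argument on the structure of metric atoms is needed beyond invoking the two lemmata already proved.
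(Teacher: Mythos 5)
Your proposal is correct and follows essentially the same route as the paper: the same three checks (consistency of the pair, $\leqp$-monotonicity via Lemma \ref{lemma:monotonicity}, and coincidence with $\immed{\db}{\prog}$ on exact pairs via Remark \ref{remark:threevaluedsemantics}), with your explicit appeal to the maximality of $\mtrue$ and $\mfalse$ in the precision order merely spelling out a step the paper leaves implicit. No gaps.
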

\begin{proof}
	We have to prove three things:
	\begin{enumerate}
		\item\label{item:consistent} \emph{$\appr{\db}{\prog}$ is consistent}, \ie for all $\mathcal{I}\in \intercons$, $\appr{\db}{\prog}(\mathcal{I})\in\intercons$.
		\item\label{item:monotone} \emph{$\appr{\db}{\prog}$ is $\leq_p$-monotone,} \ie for all $\mathcal{I}, \mathcal{J}\in \intercons$ such that $\mathcal{I}\leq_p \mathcal{J}$, it holds that $\appr{\db}{\prog}(\mathcal{I})\leq_p \appr{\db}{\prog}(\mathcal{J})$.
		\item\label{item:approximator} \emph{$\appr{\db}{\prog}$ approximates $\immed{\db}{\prog}$,} \ie for all $I\in \interpret$, $\appr{\db}{\prog}(I,I)=(\immed{\db}{\prog}(I),\immed{\db}{\prog}(I))$.
	\end{enumerate}
For Item \ref{item:consistent}, observe that from Definition \ref{def:threevalimmediate} we clearly have $	\appr{\db}{\prog}^1(\mathcal{I})(t)\subseteq 	\appr{\db}{\prog}^2(\mathcal{I})(t)$ for all $\mathcal{I}\in\intercons$ and for all $t\in\timeline$. Hence, for all $\mathcal{I}\in \intercons$, it holds that $\appr{\db}{\prog}(\mathcal{I})\in\intercons$, as desired.

Now, to prove Item \ref{item:monotone}, let $\mathcal{I}\leq_p \mathcal{J}$ be two elements of $\intercons$, and $t\in\timeline$. We need to show that $\appr{\db}{\prog}^1(\mathcal{I})(t)\subseteq \appr{\db}{\prog}^1(\mathcal{J})(t)$ and $\appr{\db}{\prog}^2(\mathcal{J})(t)\subseteq\appr{\db}{\prog}^2(\mathcal{I})(t)$. Let $P(\mathbf{s})\in \appr{\db}{\prog}^1(\mathcal{I})(t)$ (if there is no such $P(\mathbf{s})$, then we already have $\appr{\db}{\prog}^1(\mathcal{I})(t)\subseteq \appr{\db}{\prog}^1(\mathcal{J})(t)$).
 If there exists $\interval$ such that $t\in\interval$ and $ P(\mathbf{s}) @ \interval\in\db$, then $P(\mathbf{s})\in \appr{\db}{\prog}^1(\mathcal{J})(t)$ too. If this is not the case, then there exists a ground rule $M\lrule B\in\ground{\Pi}$ and $t'\in\timeline$ such that $\semcal{B}{I}{t'}=\mtrue$ and $(P(\mathbf{s}),t)\in F(M,t')$. By Lemma \ref{lemma:monotonicity}, it holds that $\semcal{B}{J}{t'}=\mtrue$ , hence $P(\mathbf{s})\in \appr{\db}{\prog}^1(\mathcal{J})(t)$. We have so far proved the first inclusion, \ie $\appr{\db}{\prog}^1(\mathcal{I})(t)\subseteq \appr{\db}{\prog}^1(\mathcal{J})(t)$. Showing that the second inclusion holds can be performed analogously.

Finally, for Item \ref{item:approximator}, let $B$ be the body of a ground rule in $\ground{\Pi}$, and $t\in\timeline$. It is easy to prove by induction on $B$, and using Definition \ref{def:threevaluedsemantics} and Remark \ref{remark:threevaluedsemantics}, that  $\semtwo{B}{(I,I)}{t}=(\semtwo{B}{I}{t},\semtwo{B}{I}{t})$. In particular, $\semtwo{B}{(I,I)}{t}\neq \mundef$, and $\semtwo{B}{(I,I)}{t}=\mtrue$  if and only if $\semtwo{B}{(I,I)}{t}\neq \mfalse$. Hence, by Definitions \ref{def:immediateconsequenceop} and \ref{def:threevalimmediate}, for all $t\in\timeline$, $	\appr{\db}{\prog}^1(I,I)(t)=	\appr{\db}{\prog}^2(I,I)(t)= \immed{\db}{\Pi}(I)(t)$, as desired.
\end{proof}

Thanks to Theorem \ref{thm:consistentapprox}, and the machinery of AFT, we are able to define the following models of interest.

\begin{definition}\label{def:models}
	Let $\db$ be a dataset, $\prog$ be a program, and $\mathcal{I}\in\intercons$. 
	\begin{itemize}
		\item $\mathcal{I}$ is a \emph{three-valued supported model of $\db$ and $\prog$} if it is a fixpoint of $\appr{\db}{\prog}$;
		\item $\mathcal{I}$ is a \emph{three-valued stable model of $\db$ and $\prog$} if it is a stable fixpoint of $\appr{\db}{\prog}$; that is, if $I_1 = \lfp \appr{\db}{\prog}^1(\cdot, I_2)$ and $I_2 = \lfp \appr{\db}{\prog}^2(I_1, \cdot)$, where $\appr{\db}{\prog}^1(\cdot, I_2)$ is the function that maps
		an interpretation $X$ to the first component of $\appr{\db}{\prog}(X,J)$, and similarly for $\appr{\db}{\prog}^2(I_1,\cdot)$;
		\item$\mathcal{I}$ is the \emph{Kripke-Kleene model of $\db$ and $\prog$} if it is the $\leq_p$-least fixpoint of $\appr{\db}{\prog}$;
		\item $\mathcal{I}$is the \emph{well-founded model of $\db$ and $\prog$} if it is the well-founded fixpoint of $\appr{\db}{\prog}$, \ie if it
		is the $\leq_p$-least three-valued stable model.
	\end{itemize}
We call a two-valued model $I$ of $\db$ and $\prog$ a \emph{two-valued stable  model of $\db$ and $\prog$} if $(I,I)$ is a three-valued stable model of $\db$ and $\prog$.
\end{definition}

\section{Comparison with Previous Work}\label{sec:comparison}

In this section, we compare the definitions of \emph{HT-models} and stable models of \cite{WalegaCGK24} with ours. In particular, we first show that any three-valued model is an HT-model from \cite{WalegaCGK24}, while the converse does not hold in general. Finally, we prove that indeed the notion of \emph{stable model} from \cite{WalegaCGK24} coincides with our definition of  two-valued stable model. For the sake of clarity, we rewrite the definitions of HT-model (Definition \ref{def:HTmodel}) and stable model (Definition \ref{def:stablewalega}) from \cite{WalegaCGK24} with the notation used in this paper. To avoid ambiguity, we will rename the stable models from \cite{WalegaCGK24} ``stable HT-models''.

\begin{definition}\label{def:HTmodel}
	Let $\db$ be a dataset, $\prog$ be a program, and $\mathcal{I}\in\intercons$. $\mathcal{I}$ is an \emph{HT-model of $\db$ and $\prog$} if $I_1$ is a model of $\db$ and for all rules $	M \lrule M_1\wedge\ldots \wedge M_k\wedge \negation M_{k+1}\wedge \ldots \wedge \negation M_m\in \ground{\prog}$, for all $t\in\timeline$, the following conditions hold
	\begin{enumerate}
		\item\label{item:HTtrue} if $\semtwo{M_i}{I_1}{t}=\mtrue$ for all $i\in\{1,\ldots, k\}$ and $\semtwo{M_j}{I_2}{t}=\mfalse$ for all $j\in\{k+1,\ldots, m\}$, then $\semtwo{M}{I_1}{t}=\mtrue$;
		\item\label{item:HTundef} if $\semtwo{M_i}{I_2}{t}=\mtrue$ for all $i\in\{1,\ldots, k\}$ and $\semtwo{M_j}{I_2}{t}=\mfalse$ for all $j\in\{k+1,\ldots, m\}$, then $\semtwo{M}{I_2}{t}=\mtrue$.
	\end{enumerate}
\end{definition}

\begin{proposition}\label{prop:HTiffthreevalued}
	Let $\db$ be a dataset, $\prog$ be a program, and $\mathcal{I}\in\intercons$. If $\mathcal{I}$ is a three-valued model of $\db$ and $\prog$, then $\mathcal{I}$ is an HT-model of $\db$ and $\prog$.
\end{proposition}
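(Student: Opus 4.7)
The plan is to fix a three-valued model $\mathcal{I}=(I_1,I_2)$ of $\db$ and $\prog$ and verify the two clauses of Definition~\ref{def:HTmodel} directly. The requirement that $I_1$ be a model of $\db$ is built into Definition~\ref{def:threevalmodel}, so the whole argument reduces to checking Conditions~\ref{item:HTtrue} and~\ref{item:HTundef} for an arbitrary ground rule $M\lrule M_1\wedge\ldots\wedge M_k\wedge\negation M_{k+1}\wedge\ldots\wedge\negation M_m$ of $\ground{\prog}$ at an arbitrary $t\in\timeline$. The main bridge I will use is the identity $\semcal{N}{I}{t}=(\semtwo{N}{I_1}{t},\semtwo{N}{I_2}{t})$ from Definition~\ref{def:threevaluedsemantics}, together with the two-valued monotonicity supplied by Lemma~\ref{lemma:twomonotonicity}: since $I_1\subseteq I_2$, every metric atom satisfies $\semtwo{N}{I_1}{t}\leqtr\semtwo{N}{I_2}{t}$, so its three-valued pair is always one of $\mtrue$, $\mundef$, $\mfalse$ under the correspondence $\equalhat$.

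For Condition~\ref{item:HTtrue}, I assume the positive atoms are $\mtrue$ under $I_1$ and the negated ones $\mfalse$ under $I_2$. Lemma~\ref{lemma:twomonotonicity} lets me propagate these values across interpretations, giving $\semtwo{M_i}{I_2}{t}=\mtrue$ and $\semtwo{M_j}{I_1}{t}=\mfalse$, so each positive conjunct collapses to $\semcal{M_i}{I}{t}=\mtrue$ and each negated one to $\semcal{\negation M_j}{I}{t}=\mtrue$. Taking the $\leqtr$-glb gives $\semcal{B}{I}{t}=\mtrue$, and the three-valued model inequality yields $\semcal{M}{I}{t}=\mtrue$, whose first component is exactly $\semtwo{M}{I_1}{t}=\mtrue$.

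For Condition~\ref{item:HTundef}, the positive hypothesis is weaker: I only know $\semtwo{M_i}{I_2}{t}=\mtrue$, so $\semcal{M_i}{I}{t}\in\{\mundef,\mtrue\}$, in particular $\geqtr\mundef$. The negated atoms behave exactly as in the previous case, giving $\semcal{\negation M_j}{I}{t}=\mtrue$. Hence $\semcal{B}{I}{t}\geqtr\mundef$, i.e.\ $\semcal{B}{I}{t}\neq\mfalse$, and the three-valued model inequality forces $\semcal{M}{I}{t}\neq\mfalse$. By the correspondences $\mundef\equalhat(\mfalse,\mtrue)$ and $\mtrue\equalhat(\mtrue,\mtrue)$, this is precisely $\semtwo{M}{I_2}{t}=\mtrue$, as required.

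The main obstacle will be Condition~\ref{item:HTundef}: unlike Condition~\ref{item:HTtrue}, the HT-hypothesis does not force the body to be three-valued $\mtrue$, only to be ``non-$\mfalse$'', so I cannot directly read off a $\mtrue$ value on the head and must instead use the correspondence between three-valued truth values and pairs to translate $\semcal{M}{I}{t}\neq\mfalse$ into the desired two-valued claim on $I_2$. Everything else is bookkeeping once Lemma~\ref{lemma:twomonotonicity} and Definition~\ref{def:threevaluedsemantics} are in hand.
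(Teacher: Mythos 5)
Your proposal is correct and follows essentially the same route as the paper's proof: verify the two conditions of Definition~\ref{def:HTmodel} by translating the two-valued hypotheses into a three-valued body value ($\mtrue$ for the first condition, non-$\mfalse$ for the second) using the pair semantics and consistency $I_1\subseteq I_2$, then apply the three-valued model inequality and read off the relevant component of the head. The only cosmetic difference is that you invoke Lemma~\ref{lemma:twomonotonicity} explicitly where the paper uses the same fact implicitly.
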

\begin{proof}
	Clearly, $I_1$ is a model for $\db$ by hypothesis. We now show that Item \ref{item:HTtrue} from Definition \ref{def:HTmodel} holds.
	Let $M \lrule B\in\ground{\prog}$ be a ground rule with $B=M_1\wedge\ldots \wedge M_k\wedge \negation M_{k+1}\wedge \ldots \wedge \negation M_m$, and $t\in\timeline$ such that  $\semtwo{M_i}{I_1}{t}=\mtrue$ for all $i\in\{1,\ldots, k\}$ and $\semtwo{M_j}{I_2}{t}=\mfalse$ for all $j\in\{k+1,\ldots, m\}$. This implies that $\semcal{M_i}{I}{t}=\mtrue$ for all $i\in\{1,\ldots, k\}$ and $\semcal{M_j}{I}{t}=\mfalse$ for all $j\in\{k+1,\ldots, m\}$. In particular, by Defintion \ref{def:threevaluedsemantics} we have that $\semcal{B}{I}{t}=\mtrue$. Since $\mathcal{I}$ is a three-valued model by hypothesis, it must be that $\semcal{M}{I}{t}=\mtrue$ too. Hence, $\semtwo{M}{I_1}{t}=\mtrue$, as desired.
	
	Now, we prove that Item \ref{item:HTundef} from Definition \ref{def:HTmodel} holds as well.
	Let $M \lrule B\in\ground{\prog}$ be a ground rule with $B=M_1\wedge\ldots \wedge M_k\wedge \negation M_{k+1}\wedge \ldots \wedge \negation M_m$, and $t\in\timeline$ such that  $\semtwo{M_i}{I_2}{t}=\mtrue$ for all $i\in\{1,\ldots, k\}$ and $\semtwo{M_j}{I_2}{t}=\mfalse$ for all $j\in\{k+1,\ldots, m\}$. This implies that $\semcal{M_i}{I}{t}\neq \mfalse$ for all $i\in\{1,\ldots, k\}$ and $\semcal{M_j}{I}{t}=\mfalse$ for all $j\in\{k+1,\ldots, m\}$. In particular, by Definition \ref{def:threevaluedsemantics} we have that $\semcal{B}{I}{t}\neq \mfalse$. Since $\mathcal{I}$ is a three-valued model, it must be that $\semcal{M}{I}{t}\neq\mfalse$. Hence, $\semtwo{M}{I_2}{t}=\mtrue$, concluding the proof.
\end{proof}

The converse of Proposition \ref{prop:HTiffthreevalued} does not hold, i.e.\ not all HT-models are three-valued models. In fact, by Definition \ref{def:threevalmodel}, $\mathcal{I}\in \intercons$  is a three-valued model of $\prog$ iff for every rule $M \lrule B\in\ground{\prog}$, and for every $t\in\timeline$, it holds that $\semcal{B}{I}{t}\leqtr\semcal{M}{I}{t}$, i.e.\ it holds that
	\begin{enumerate}
	\item\label{item:threetrue} if $\semtwo{M_i}{I_1}{t}=\mtrue$ for all $i\in\{1,\ldots, k\}$ and $\semtwo{M_j}{I_2}{t}=\mfalse$ for all $j\in\{k+1,\ldots, m\}$, then $\semtwo{M}{I_1}{t}=\mtrue$,
	\item\label{item:threeundef} if $\semtwo{M_i}{I_2}{t}=\mtrue$ for all $i\in\{1,\ldots, k\}$ and $\semtwo{M_j}{I_1}{t}=\mfalse$ for all $j\in\{k+1,\ldots, m\}$, then $\semtwo{M}{I_2}{t}=\mtrue$,
\end{enumerate}
where $B:=M_1\wedge\ldots \wedge M_k\wedge \negation M_{k+1}\wedge \ldots \wedge \negation M_m$. Notice that, while the first condition above is the same as the first one in Definition \ref{def:HTmodel}, the second condition is more strict than the second condition of Definition \ref{def:HTmodel}: it applies even when $\semtwo{M_h}{I_2}{t}=\mtrue$ for some $h\in\{k+1,\ldots, m\}$, provided that $\semtwo{M_j}{I_1}{t}=\mfalse$ for all $j\in\{k+1,\ldots, m\}$. 

We show this discrepancy more concretely in the example below.

\begin{example}
	Consider an empty dataset $\db$ and a program $\Pi$ with just one rule, namely $P\lrule \negation Q$, where $P$ and $Q$ are predicates of arity $0$. Let $(I_1,I_2)$ be the three-valued interpretation with $I_1(t)=\emptyset$ and $I_2(t)=\{Q\}$ for all $t\in\timeline$. Clearly, $I_1$ is a model for the empty dataset $\db$. Moreover, since the antecedent in both conditions of Definition \ref{def:HTmodel} are not satisfied, $(I_1,I_2)$ is an HT-model for $\db$ and $\Pi$. However, $(I_1,I_2)$ is not a three-valued model: $\semthree{\negation Q}{I_1}{I_2}{t}=\semthree{Q}{I_1}{I_2}{t}^{-1}=(\semtwo{Q}{I_1}{t}, \semtwo{Q}{I_2}{t})^{-1}=\mundef^{-1}=\mundef \not\leqtr \mfalse =\semthree{P}{I_1}{I_2}{t}$, for all $t\in\timeline$.
\end{example}

We now focus on \emph{stable HT-models}, and show that they correspond to our two-valued stable models.

\begin{definition}\label{def:stablewalega}
	Let $\db$ be a dataset, $\prog$ be a program, and $I\in\interpret$. $I$ is a \emph{stable HT-model of $\db$ and $\prog$} if $(I,I)$ is an HT-model of $\db$ and $\prog$ and 
	\begin{equation}\label{eq:glbWalegastable}
		I=\glb_\subseteq\{J\mid (J,I) \text{ is an HT-model of }\db\text{ and }\prog\}.
	\end{equation}
\end{definition}

\begin{lemma}\label{lemma:HTprefixpt}
	Let $\db$ be a dataset, $\prog$ be a program, and $I, J\in\interpret$. If $(I,J)$ is an HT-model, then $I$ is a prefixpoint of $\appr{\db}{\Pi}^1(\cdot, J)$.
\end{lemma}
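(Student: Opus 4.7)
The plan is to unfold the definition of $\appr{\db}{\prog}^1(\cdot, J)$ applied to $I$ and show that every ground relational atom it produces at time $t$ already belongs to $I(t)$. Concretely, I need to show $\appr{\db}{\prog}^1(I,J)(t) \subseteq I(t)$ for every $t \in \timeline$, and I will split this according to the two clauses defining the operator in Definition \ref{def:threevalimmediate}.

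First, the database clause is immediate: if $P(\mathbf{s}) \in \appr{\db}{\prog}^1(I,J)(t)$ arises from a fact $P(\mathbf{s})@\interval \in \db$ with $t \in \interval$, then $P(\mathbf{s}) \in I(t)$ because, by Definition \ref{def:HTmodel}, $(I,J)$ being an HT-model requires $I$ to be a model of $\db$.

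For the rule clause, suppose $P(\mathbf{s}) \in \appr{\db}{\prog}^1(I,J)(t)$ arises from a ground rule $M \lrule M_1 \wedge \ldots \wedge M_k \wedge \negation M_{k+1} \wedge \ldots \wedge \negation M_m$ in $\ground{\prog}$ and some $t' \in \timeline$ with $\semcal{B}{(I,J)}{t'} = \mtrue$ and $(P(\mathbf{s}), t) \in F(M, t')$. The key step is translating the three-valued truth of the body into two-valued conditions: by Definition \ref{def:threevaluedsemantics} (together with Remark \ref{remark:threevaluedsemantics}), $\semcal{B}{(I,J)}{t'} = \mtrue$ forces $\semtwo{M_i}{I}{t'} = \mtrue$ for every positive atom $i \in \{1,\ldots,k\}$, and $\semcal{M_j}{(I,J)}{t'} = \mfalse$ (i.e., $\semtwo{M_j}{J}{t'} = \mfalse$) for every negated atom $j \in \{k+1,\ldots,m\}$. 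These are exactly the hypotheses of Item \ref{item:HTtrue} in Definition \ref{def:HTmodel} for the pair $(I,J)$, so the HT-model condition yields $\semtwo{M}{I}{t'} = \mtrue$.

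Finally, I invoke Proposition \ref{prop:propertyF}(\ref{item:two-valecase}) to descend from the head atom $M$ to its components: since $\semtwo{M}{I}{t'} = \mtrue$, for every $(Q(\mathbf{s}'), t'') \in F(M, t')$ we have $\semtwo{Q(\mathbf{s}')}{I}{t''} = \mtrue$. Taking in particular $(P(\mathbf{s}), t) \in F(M, t')$ gives $P(\mathbf{s}) \in I(t)$, as required. The only subtle point is the translation of the three-valued body semantics into the exact antecedent of the HT-condition; once that reformulation is in place, the rest is routine bookkeeping.
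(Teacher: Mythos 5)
Your proof is correct and follows essentially the same route as the paper's: split on the database and rule clauses of Definition~\ref{def:threevalimmediate}, use the HT-model condition (Item~\ref{item:HTtrue} of Definition~\ref{def:HTmodel}) to get $\semtwo{M}{I}{t'}=\mtrue$, and descend via Proposition~\ref{prop:propertyF}. The only difference is that you spell out the translation of $\semcal{B}{(I,J)}{t'}=\mtrue$ into the HT antecedent, which the paper leaves implicit.
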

\begin{proof}
	We have to show that $\appr{\db}{\Pi}^1(I, J)\subseteq I$. Let $t\in\timeline$, and $P(\mathbf{s})\in \appr{\db}{\Pi}^1(I,J)(t)$. Then, there are two possible cases:
	\begin{enumerate}
		\item There exists an interval $\delta$ such that $t\in\delta$ and $P(\mathbf{s})@ \delta \in\db$. In this case, it must hold that $P(\mathbf{s})\in I(t)$, because $(I,J)$ is an HT-model of $\db$ and $\Pi$.
		\item There exist $M\lrule B\in\ground{\Pi}$ and $t'\in\timeline$ such that $\semthree{B}{I}{J}{t'}=\mtrue$ and $(P(\mathbf{s}),t)\in F(M,t')$. By Definition \ref{def:threevaluedsemantics}, it is easy to see that $\semthree{B}{I}{J}{t'}=\mtrue$ satisfies the antecedent of Item \ref{item:HTtrue} of Definition \ref{def:HTmodel}. Since $(I,J)$ is an HT-model, we can derive that $\semtwo{M}{I}{t'}=\mtrue$. By Proposition \ref{prop:propertyF}, we have that $\semtwo{P(\mathbf{s})}{I}{t}=\mtrue$, i.e.\ $P(\mathbf{s})\in I(t)$.
	\end{enumerate} 
	In both cases above we can conclude that $P(\mathbf{s})\in I(t)$. By the arbitrarity of $P(\mathbf{s})$ and $t$, it holds that $\appr{\db}{\Pi}^1(I,J)\subseteq I$, i.e.\ $I$ is a prefixpoint of $\appr{\db}{\Pi}^1(\cdot,J)$, as desired.
\end{proof}

\begin{theorem}
	Let $\db$ be a dataset, $\prog$ be a program, and $I\in\interpret$. $I$ is  a stable HT-model of $\db$ and $\Pi$ if and only if $I$ is a two-valued stable model of $\db$ and $\Pi$. 
\end{theorem}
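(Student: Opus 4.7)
The plan is to prove the biconditional by separately handling each direction, with Lemma \ref{lemma:HTprefixpt} serving as the bridge between HT-models of the form $(J, I)$ and prefixpoints of $\appr{\db}{\prog}^1(\cdot, I)$.

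For the direction $(\Leftarrow)$, assume $I$ is a two-valued stable model, so in particular $I = \lfp \appr{\db}{\prog}^1(\cdot, I)$. Evaluating at the exact pair gives $\appr{\db}{\prog}^1(I, I) = I$, which by Theorem \ref{thm:consistentapprox} yields $\immed{\db}{\prog}(I) = I$; hence $I$ is a two-valued model of $\db$ and $\prog$, making $(I, I)$ a three-valued model and, by Proposition \ref{prop:HTiffthreevalued}, an HT-model. For the minimality condition in Definition \ref{def:stablewalega}, any $J$ with $(J, I)$ an HT-model satisfies $J \subseteq I$ (as $(J,I)\in\intercons$) and is a prefixpoint of $\appr{\db}{\prog}^1(\cdot, I)$ by Lemma \ref{lemma:HTprefixpt}; hence $I \subseteq J$ by leastness of the lfp, yielding the glb equality.

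For the direction $(\Rightarrow)$, assume $I$ is a stable HT-model. Lemma \ref{lemma:HTprefixpt} applied to the HT-model $(I, I)$ shows $I$ is a prefixpoint of $\appr{\db}{\prog}^1(\cdot, I)$, so $L := \lfp \appr{\db}{\prog}^1(\cdot, I) \subseteq I$. The central step is to verify that $(L, I)$ is itself an HT-model, so that the minimality condition forces $I \subseteq L$ and hence $L = I$. Condition \ref{item:HTundef} of Definition \ref{def:HTmodel} for $(L, I)$ depends only on the second component and is inherited from $(I, I)$. For condition \ref{item:HTtrue}, suppose positive body atoms are true in $L$ and negated body atoms are false in $I$ at time $t$; since $L \subseteq I$, Lemma \ref{lemma:twomonotonicity} yields that the positive atoms are also true in $I$ and the negated atoms also false in $L$, so $\semthree{B}{L}{I}{t} = \mtrue$. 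From $L = \appr{\db}{\prog}^1(L, I)$ and Proposition \ref{prop:propertyF}, the head is then true in $L$. This proves $I = \lfp \appr{\db}{\prog}^1(\cdot, I)$.

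For the second stability condition $I = \lfp \appr{\db}{\prog}^2(I, \cdot)$, note that since $\appr{\db}{\prog}^1(I,I) = I$, Theorem \ref{thm:consistentapprox} gives $\appr{\db}{\prog}^2(I, I) = I$, so $I$ is a fixpoint of $\appr{\db}{\prog}^2(I, \cdot)$; by monotonicity in the second argument this operator stabilizes the sublattice $\{K \mid I \subseteq K\}$, of which $I$ is the minimum, so $I$ is automatically its least fixpoint. The main obstacle is the verification that $(L, I)$ is an HT-model in the $(\Rightarrow)$ direction: this is a partial converse to Lemma \ref{lemma:HTprefixpt} and requires carefully reconciling the asymmetric HT condition \ref{item:HTtrue} antecedent (positive atoms checked in the lower component and negated atoms in the upper) with the pair-based three-valued body semantics, a reconciliation made possible only by the inclusion $L \subseteq I$ and the monotonicity of two-valued evaluation.
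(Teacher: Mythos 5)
Your proof is correct and follows essentially the same route as the paper: both directions rest on Lemma \ref{lemma:HTprefixpt}, the Knaster--Tarski fact that the least fixpoint of the monotone operator $\appr{\db}{\prog}^1(\cdot,I)$ is its least pre-fixpoint, and the verification that a fixpoint $J\subseteq I$ of $\appr{\db}{\prog}^1(\cdot,I)$ makes $(J,I)$ an HT-model, with the second stability condition discharged exactly as in the paper via Theorem \ref{thm:consistentapprox} and the restriction to consistent pairs. Your forward direction merely streamlines the paper's argument by running that verification once, directly on $L=\lfp \appr{\db}{\prog}^1(\cdot,I)$, instead of twice (for $(\appr{\db}{\prog}^1(I,I),I)$ and then for arbitrary fixpoints $J\subseteq I$); just add the one-line observation that $L=\appr{\db}{\prog}^1(L,I)$ contains all dataset facts and hence $L$ is a model of $\db$, as Definition \ref{def:HTmodel} requires.
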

\begin{proof}
	Suppose $I$ is a stable HT-model. By Definition \ref{def:HTmodel}, $I$ is a model of $\db$. We now prove that $(I,I)$ is a stable fixpoint of $\appr{\db}{\Pi}$, \ie $I=\lfp \appr{\db}{\Pi}^1(\cdot,I)=\lfp\appr{\db}{\Pi}^2(I,\cdot)$. By Theorem \ref{thm:consistentapprox},  $\appr{\db}{\Pi}^1(I,I)=I$ if and only if $\appr{\db}{\Pi}^2(I,I)=I$, \ie $I$ is a fixpoint of  $\appr{\db}{\Pi}^1(\cdot,I)$ if and only if $I$ is a fixpoint of  $\appr{\db}{\Pi}^2(I,\cdot)$. 
	Since we consider only consistent pairs, \ie pairs $(J_1,J_2)$ with $J_1\subseteq J_2$, it is sufficient to prove that   $I=\lfp \appr{\db}{\Pi}^1(\cdot,I)$. 
	 By Lemma \ref{lemma:HTprefixpt}, $I$ is a pre-fixpoint of  $\appr{\db}{\Pi}^1(\cdot,I)$. We now show that $I\subseteq \appr{\db}{\Pi}^1(I,I)$. Since $I$ is a stable HT-model, it suffices to  prove that $(\appr{\db}{\Pi}^1(I,I),I)$ is an HT-model of $\db$ and $\Pi$. Clearly, $\appr{\db}{\Pi}^1(I,I)$ is a model of $\db$ by Definition \ref{def:threevalimmediate}. It remains to verify the two conditions listed in Definition \ref{def:HTmodel}. Let $B=M_1\wedge\ldots \wedge M_k\wedge \negation M_{k+1}\wedge \ldots \wedge \negation M_m$ such that $	M \lrule B \in \ground{\prog}$, and $t\in\timeline$:
	 \begin{enumerate}
	 	\item Suppose $\semtwo{M_i}{\appr{\db}{\Pi}^1(I,I)}{t}=\mtrue$ for all $i\in\{1,\ldots, k\}$ and $\semtwo{M_j}{I}{t}=\mfalse$ for all $j\in\{k+1,\ldots, m\}$. By Lemma \ref{lemma:twomonotonicity}, it holds that $\semtwo{M_i}{I}{t}=\mtrue$ for all $i\in\{1,\ldots, k\}$. Clearly, since all the $M_i$'s do not contain negation, we also have that $\semthree{M_i}{I}{I}{t}=\mtrue$ for all $i\in\{1,\ldots, k\}$ and $\semthree{M_j}{I}{I}{t}=\mfalse$ for all $j\in\{k+1,\ldots, m\}$. Hence, we have that $\semthree{B}{I}{I}{t}=\mtrue$. By Definition \ref{def:threevalimmediate}, $P(\mathbf{s})\in\appr{\db}{\Pi}^1(I,I)(t')$ for all $t'\in\timeline$ and $P(\mathbf{s})$ ground relational atoms such that $(P(\mathbf{s}),t')\in F(M,t)$. By Proposition \ref{prop:propertyF}, it follows that $\semtwo{M}{\appr{\db}{\Pi}^1(I,I)}{t}=\mtrue$, as desired.
	 	\item Suppose $\semtwo{M_i}{I}{t}=\mtrue$ for all $i\in\{1,\ldots, k\}$ and $\semtwo{M_j}{I}{t}=\mfalse$ for all $j\in\{k+1,\ldots, m\}$. By the same reasoning performed at the point above, it follows that $\semtwo{M}{\appr{\db}{\Pi}^1(I,I)}{t}=\mtrue$. By Lemma \ref{lemma:twomonotonicity}, we get that $\semtwo{M}{I}{t}=\mtrue$, as desired
	 \end{enumerate}
 Thus, we can conclude that $(\appr{\db}{\Pi}^1(I,I),I)$ is an HT-model. Since $I$ is a stable HT-model, it follows that $\appr{\db}{\Pi}^1(I,I)=I$, i.e.\ $I$ is a fixpoint of $ \appr{\db}{\Pi}^1(\cdot,I)$. 
 
 We prove now that $I$ is the \emph{least} fixpoint of $\appr{\db}{\Pi}^1(\cdot,I)$.  Let $J\subseteq I$ be a fixpoint of  $\appr{\db}{\Pi}^1(\cdot,I)$. It suffices to show that $I\subseteq J$. If we show that $(J,I)$ is an HT-model, we can again conclude by the minimality of $I$. Hence, we proceed to show that $(J,I)$ is indeed an HT-model of $\db$ and $\Pi$. First, observe that since $J=\appr{\db}{\Pi}^1(J,I)$, $J$ is a model of $\db$ by Definition \ref{def:threevalimmediate}. It remains to verify the two conditions listed in Definition \ref{def:HTmodel}. Let $B=M_1\wedge\ldots \wedge M_k\wedge \negation M_{k+1}\wedge \ldots \wedge \negation M_m$ such that $	M \lrule B \in \ground{\prog}$, and $t\in\timeline$:
 \begin{enumerate}
 	\item Suppose $\semtwo{M_i}{J}{t}=\mtrue$ for all $i\in\{1,\ldots, k\}$ and $\semtwo{M_j}{I}{t}=\mfalse$ for all $j\in\{k+1,\ldots, m\}$. By Lemma \ref{lemma:twomonotonicity}, it holds that $\semtwo{M_i}{I}{t}=\mtrue$ for all $i\in\{1,\ldots, k\}$, and $\semtwo{M_j}{J}{t}=\mfalse$ for all $j\in\{k+1,\ldots, m\}$. Hence, we have that $\semthree{M_i}{J}{I}{t}=\mtrue$ for all $i\in\{1,\ldots, k\}$ and $\semthree{M_j}{J}{I}{t}=\mfalse$ for all $j\in\{k+1,\ldots, m\}$. It follows that $\semthree{B}{J}{I}{t}=\mtrue$. By Definition \ref{def:threevalimmediate}, $P(\mathbf{s})\in\appr{\db}{\Pi}^1(J,I)(t')=J(t')$ for all $t'\in\timeline$ and $P(\mathbf{s})$ ground relational atoms such that $(P(\mathbf{s}),t')\in F(M,t)$. By Proposition \ref{prop:propertyF}, it follows that $\semtwo{M}{J}{t}=\mtrue$, as desired.
 	\item Suppose $\semtwo{M_i}{I}{t}=\mtrue$ for all $i\in\{1,\ldots, k\}$ and $\semtwo{M_j}{I}{t}=\mfalse$ for all $j\in\{k+1,\ldots, m\}$. By the same reasoning performed at the point above, it follows that $\semtwo{M}{J}{t}=\mtrue$. By Lemma \ref{lemma:twomonotonicity}, we get that $\semtwo{M}{I}{t}=\mtrue$, as desired.
 \end{enumerate}
 Thus, we can conclude that $(J,I)$ is an HT-model. Since $I$ is a stable HT-model, it follows that $J=I$, i.e.\ $I$ is the least fixpoint of $ \appr{\db}{\Pi}^1(\cdot,I)$.

Assume now that $(I,I)$ is a three-valued stable model of $\db$ and $\Pi$. By Definition \ref{def:models}, it holds that $I$ is a fixpoint of both $\appr{\db}{\Pi}^1(\cdot,I)$ and $\appr{\db}{\Pi}^2(I,\cdot)$, i.e.\ $\appr{\db}{\Pi}(I,I)=(I,I)$. By Proposition  \ref{prop:HTiffthreevalued}, $(I,I)$ is an HT-model of $\db$ and $\Pi$. It remains to show that the equality \eqref{eq:glbWalegastable} in Definition \ref{def:stablewalega} holds. Let $J\subseteq I$ such that $(J,I)$ is an HT-model of $\db$ and $\Pi$. We show that indeed $J$ must be equal to $I$. 
It suffices to prove that $I\subseteq J$. 

 By Proposition 3.3 of  \cite{DeneckerMT04}, $\appr{\db}{\Pi}^1(\cdot,I)$ is a monotone operator on the complete lattice $\langle \{H\mid H\subseteq I\}, \subseteq\rangle$. Hence, by the Tarski–Knaster Theorem \cite{Tarski55}, the least fixpoint of $\appr{\db}{\Pi}^1(\cdot,I)$ coincide with its least pre-fixpoint. Since $\lfp \appr{\db}{\Pi}^1(\cdot,I)=I$ by hypothesis, it follows that $I$ is also the least pre-fixpoint of $\appr{\db}{\Pi}^1(\cdot,I)$. By Lemma \ref{lemma:HTprefixpt}, $J$ is a prefixpoint of $\appr{\db}{\Pi}^1(\cdot, I)$. Thus, it must be the case that $I\subseteq J$, as desired.
\end{proof}

\section{Conclusions and Future Work}

Using the solid framework of Approximation Fixpoint Theory, we have defined several semantics of interest for DatalogMTL$\neg$: the Kripke-Kleene, well-founded, stable, and supported model semantics. This was possible by defining a \emph{consistent approximator}, i.e.\ an operator on the space of partial, three-valued interpretations. As we have shown, such an operator does not require any farfetched choice; on the contrary, it is the most natural adaptation of Fitting's three-valued immediate consequence operator to the setting of DatalogMTL$\neg$, showcasing, yet again, the simplicity, versatility, and power of the AFT formalism. In the final section, we have also proved that the stable models obtained via AFT are equivalent to the ones defined in \cite{WalegaCGK24} through interpretations stemming from the logic of HT.

Since pairs of interpretations are used both in AFT and for the logic of HT, although in a different way, it is often the case that when the logic of HT is employed to define some \emph{stable models}, AFT can provide an easier route to achieve the same results, while also giving access to other types of models for free. This might be true for the \emph{metric stable models} defined in \cite{BeckerCDSS24}, used to obtain the non-monotonic Metric Equilibrium Logic (MEL) starting from the Metric logic of HT, a temporal extension of the logic of HT. If this is indeed the case, AFT could provide a solid and reliable ground for a comparison between MEL developed in \cite{BeckerCDSS24}, and  the extension of Datalog MTL with negation-as-failure of \cite{WalegaCGK24}, discussed in this paper.

It is also worth mentioning that AFT grants access to a wide body of theoretical results that hold for its applications. A line of work focused on stratification results \cite{VennekensGD06,BogaertsC21}. In particular, a notion of \emph{stratifiable operator} was defined, allowing for the computation of the Kripke-Kleene, supported, well-founded, and stable models stratum by stratum. This seems to be equivalent to what was done for stratifiable DatalogMTL$\neg$, i.e.\ for DatalogMTL$\neg$ programs with no cyclic dependencies via negation \cite{CucalaWGK21}.

\nocite{*}
\bibliographystyle{eptcs}
\bibliography{mybib}

\begin{thebibliography}{10}
\providecommand{\bibitemdeclare}[2]{}
\providecommand{\surnamestart}{}
\providecommand{\surnameend}{}
\providecommand{\urlprefix}{Available at }
\providecommand{\url}[1]{\texttt{#1}}
\providecommand{\href}[2]{\texttt{#2}}
\providecommand{\urlalt}[2]{\href{#1}{#2}}
\providecommand{\doi}[1]{doi:\urlalt{https://doi.org/#1}{#1}}
\providecommand{\eprint}[1]{arXiv:\urlalt{https://arxiv.org/abs/#1}{#1}}
\providecommand{\bibinfo}[2]{#2}

\bibitemdeclare{book}{AbiteboulHV95}
\bibitem{AbiteboulHV95}
\bibinfo{author}{S.~\surnamestart Abiteboul\surnameend},
  \bibinfo{author}{R.~\surnamestart Hull\surnameend} \&
  \bibinfo{author}{V.~\surnamestart Vianu\surnameend} (\bibinfo{year}{1995}):
  \emph{\bibinfo{title}{Foundations of Databases}}.
\newblock \bibinfo{publisher}{Addison-Wesley}.

\bibitemdeclare{article}{Antic20}
\bibitem{Antic20}
\bibinfo{author}{C.~\surnamestart Antic\surnameend} (\bibinfo{year}{2020}):
  \emph{\bibinfo{title}{Fixed point semantics for stream reasoning}}.
\newblock {\slshape \bibinfo{journal}{Artif. Intell.}} \bibinfo{volume}{288},
  p. \bibinfo{pages}{103370}, \doi{10.1016/J.ARTINT.2020.103370}.

\bibitemdeclare{article}{BeckerCDSS24}
\bibitem{BeckerCDSS24}
\bibinfo{author}{A.~\surnamestart Becker\surnameend},
  \bibinfo{author}{P.~\surnamestart Cabalar\surnameend},
  \bibinfo{author}{M.~\surnamestart Di{\'{e}}guez\surnameend},
  \bibinfo{author}{T.~\surnamestart Schaub\surnameend} \&
  \bibinfo{author}{A.~\surnamestart Schuhmann\surnameend}
  (\bibinfo{year}{2024}): \emph{\bibinfo{title}{Metric Temporal Equilibrium
  Logic over Timed Traces}}.
\newblock {\slshape \bibinfo{journal}{Theory Pract. Log. Program.}}
  \bibinfo{volume}{24}(\bibinfo{number}{3}), pp. \bibinfo{pages}{425--452},
  \doi{10.1017/S1471068424000139}.

\bibitemdeclare{inproceedings}{Bogaerts19}
\bibitem{Bogaerts19}
\bibinfo{author}{B.~\surnamestart Bogaerts\surnameend} (\bibinfo{year}{2019}):
  \emph{\bibinfo{title}{Weighted Abstract Dialectical Frameworks through the
  Lens of Approximation Fixpoint Theory}}.
\newblock In: {\slshape \bibinfo{booktitle}{The Thirty-Third {AAAI} Conference
  on Artificial Intelligence, {AAAI} 2019, The Thirty-First Innovative
  Applications of Artificial Intelligence Conference, {IAAI} 2019, The Ninth
  {AAAI} Symposium on Educational Advances in Artificial Intelligence, {EAAI}
  2019, Honolulu, Hawaii, USA, January 27 - February 1, 2019}},
  \bibinfo{publisher}{{AAAI} Press}, pp. \bibinfo{pages}{2686--2693},
  \doi{10.1609/AAAI.V33I01.33012686}.

\bibitemdeclare{article}{BogaertsC18}
\bibitem{BogaertsC18}
\bibinfo{author}{B.~\surnamestart Bogaerts\surnameend} \&
  \bibinfo{author}{L.~\surnamestart Cruz{-}Filipe\surnameend}
  (\bibinfo{year}{2018}): \emph{\bibinfo{title}{Fixpoint semantics for active
  integrity constraints}}.
\newblock {\slshape \bibinfo{journal}{Artif. Intell.}} \bibinfo{volume}{255},
  pp. \bibinfo{pages}{43--70}, \doi{10.1016/J.ARTINT.2017.11.003}.

\bibitemdeclare{inproceedings}{BogaertsJ21}
\bibitem{BogaertsJ21}
\bibinfo{author}{B.~\surnamestart Bogaerts\surnameend} \&
  \bibinfo{author}{M.~\surnamestart Jakubowski\surnameend}
  (\bibinfo{year}{2021}): \emph{\bibinfo{title}{Fixpoint Semantics for
  Recursive {SHACL}}}.
\newblock In \bibinfo{editor}{A.~\surnamestart Formisano\surnameend},
  \bibinfo{editor}{Y.~A. \surnamestart Liu\surnameend},
  \bibinfo{editor}{B.~\surnamestart Bogaerts\surnameend},
  \bibinfo{editor}{A.~\surnamestart Brik\surnameend},
  \bibinfo{editor}{V.~\surnamestart Dahl\surnameend},
  \bibinfo{editor}{C.~\surnamestart Dodaro\surnameend},
  \bibinfo{editor}{P.~\surnamestart Fodor\surnameend}, \bibinfo{editor}{G.~L.
  \surnamestart Pozzato\surnameend}, \bibinfo{editor}{J.~\surnamestart
  Vennekens\surnameend} \& \bibinfo{editor}{N.~\surnamestart Zhou\surnameend},
  editors: {\slshape \bibinfo{booktitle}{Proceedings 37th International
  Conference on Logic Programming (Technical Communications), {ICLP} Technical
  Communications 2021, Porto (virtual event), 20-27th September 2021}},
  {\slshape \bibinfo{series}{{EPTCS}}} \bibinfo{volume}{345}, pp.
  \bibinfo{pages}{41--47}, \doi{10.4204/EPTCS.345.14}.

\bibitemdeclare{article}{BogaertsC21}
\bibitem{BogaertsC21}
\bibinfo{author}{Bart \surnamestart Bogaerts\surnameend} \&
  \bibinfo{author}{Lu{\'{\i}}s \surnamestart Cruz{-}Filipe\surnameend}
  (\bibinfo{year}{2021}): \emph{\bibinfo{title}{Stratification in Approximation
  Fixpoint Theory and Its Application to Active Integrity Constraints}}.
\newblock {\slshape \bibinfo{journal}{{ACM} Trans. Comput. Log.}}
  \bibinfo{volume}{22}(\bibinfo{number}{1}), pp. \bibinfo{pages}{6:1--6:19},
  \doi{10.1145/3430750}.

\bibitemdeclare{article}{BrandtKRXZ18}
\bibitem{BrandtKRXZ18}
\bibinfo{author}{S.~\surnamestart Brandt\surnameend}, \bibinfo{author}{E.~G.
  \surnamestart Kalayci\surnameend}, \bibinfo{author}{V.~\surnamestart
  Ryzhikov\surnameend}, \bibinfo{author}{G.~\surnamestart Xiao\surnameend} \&
  \bibinfo{author}{M.~\surnamestart Zakharyaschev\surnameend}
  (\bibinfo{year}{2018}): \emph{\bibinfo{title}{Querying Log Data with Metric
  Temporal Logic}}.
\newblock {\slshape \bibinfo{journal}{J. Artif. Intell. Res.}}
  \bibinfo{volume}{62}, pp. \bibinfo{pages}{829--877},
  \doi{10.1613/JAIR.1.11229}.

\bibitemdeclare{inproceedings}{ColomboBCL23}
\bibitem{ColomboBCL23}
\bibinfo{author}{A.~\surnamestart Colombo\surnameend},
  \bibinfo{author}{L.~\surnamestart Bellomarini\surnameend},
  \bibinfo{author}{S.~\surnamestart Ceri\surnameend} \&
  \bibinfo{author}{E.~\surnamestart Laurenza\surnameend}
  (\bibinfo{year}{2023}): \emph{\bibinfo{title}{Smart Derivative Contracts in
  DatalogMTL}}.
\newblock In \bibinfo{editor}{J.~\surnamestart Stoyanovich\surnameend},
  \bibinfo{editor}{J.~\surnamestart Teubner\surnameend},
  \bibinfo{editor}{N.~\surnamestart Mamoulis\surnameend},
  \bibinfo{editor}{E.~\surnamestart Pitoura\surnameend},
  \bibinfo{editor}{J.~\surnamestart M{\"{u}}hlig\surnameend},
  \bibinfo{editor}{K.~\surnamestart Hose\surnameend}, \bibinfo{editor}{S.~S.
  \surnamestart Bhowmick\surnameend} \& \bibinfo{editor}{M.~\surnamestart
  Lissandrini\surnameend}, editors: {\slshape \bibinfo{booktitle}{Proceedings
  26th International Conference on Extending Database Technology, {EDBT} 2023,
  Ioannina, Greece, March 28-31, 2023}},
  \bibinfo{publisher}{OpenProceedings.org}, pp. \bibinfo{pages}{773--781},
  \doi{10.48786/EDBT.2023.65}.

\bibitemdeclare{inproceedings}{CucalaWGK21}
\bibitem{CucalaWGK21}
\bibinfo{author}{D.~J.~Tena \surnamestart Cucala\surnameend},
  \bibinfo{author}{P.~A. \surnamestart Walega\surnameend},
  \bibinfo{author}{B.~Cuenca \surnamestart Grau\surnameend} \&
  \bibinfo{author}{E.V. \surnamestart Kostylev\surnameend}
  (\bibinfo{year}{2021}): \emph{\bibinfo{title}{Stratified Negation in Datalog
  with Metric Temporal Operators}}.
\newblock In: {\slshape \bibinfo{booktitle}{Thirty-Fifth {AAAI} Conference on
  Artificial Intelligence, {AAAI} 2021, Thirty-Third Conference on Innovative
  Applications of Artificial Intelligence, {IAAI} 2021, The Eleventh Symposium
  on Educational Advances in Artificial Intelligence, {EAAI} 2021, Virtual
  Event, February 2-9, 2021}}, \bibinfo{publisher}{{AAAI} Press}, pp.
  \bibinfo{pages}{6488--6495}, \doi{10.1609/AAAI.V35I7.16804}.

\bibitemdeclare{inproceedings}{DeneckerBV12}
\bibitem{DeneckerBV12}
\bibinfo{author}{M.~\surnamestart Denecker\surnameend},
  \bibinfo{author}{M.~\surnamestart Bruynooghe\surnameend} \&
  \bibinfo{author}{J.~\surnamestart Vennekens\surnameend}
  (\bibinfo{year}{2012}): \emph{\bibinfo{title}{Approximation Fixpoint Theory
  and the Semantics of Logic and Answers Set Programs}}.
\newblock In \bibinfo{editor}{E.~\surnamestart Erdem\surnameend},
  \bibinfo{editor}{J.~\surnamestart Lee\surnameend},
  \bibinfo{editor}{Y.~\surnamestart Lierler\surnameend} \&
  \bibinfo{editor}{D.~\surnamestart Pearce\surnameend}, editors: {\slshape
  \bibinfo{booktitle}{Correct Reasoning - Essays on Logic-Based {AI} in Honour
  of Vladimir Lifschitz}}, {\slshape \bibinfo{series}{Lecture Notes in Computer
  Science}} \bibinfo{volume}{7265}, \bibinfo{publisher}{Springer}, pp.
  \bibinfo{pages}{178--194}, \doi{10.1007/978-3-642-30743-0\_13}.

\bibitemdeclare{incollection}{DeneckerMT00}
\bibitem{DeneckerMT00}
\bibinfo{author}{M.~\surnamestart Denecker\surnameend},
  \bibinfo{author}{V.~\surnamestart Marek\surnameend} \&
  \bibinfo{author}{M.~\surnamestart Truszczy{\'n}ski\surnameend}
  (\bibinfo{year}{2000}): \emph{\bibinfo{title}{Approximations, Stable
  Operators, Well-Founded Fixpoints and Applications in Nonmonotonic
  Reasoning}}.
\newblock In: {\slshape \bibinfo{booktitle}{Logic-Based Artificial
  Intelligence}}, \bibinfo{volume}{597}, \bibinfo{publisher}{Springer US}, pp.
  \bibinfo{pages}{127--144}, \doi{10.1007/978-1-4615-1567-8_6}.

\bibitemdeclare{article}{DeneckerMT04}
\bibitem{DeneckerMT04}
\bibinfo{author}{M.~\surnamestart Denecker\surnameend}, \bibinfo{author}{V.~W.
  \surnamestart Marek\surnameend} \& \bibinfo{author}{M.~\surnamestart
  Truszczy{\'n}ski\surnameend} (\bibinfo{year}{2004}):
  \emph{\bibinfo{title}{Ultimate approximation and its application in
  nonmonotonic knowledge representation systems}}.
\newblock {\slshape \bibinfo{journal}{Inf. Comput.}}
  \bibinfo{volume}{192}(\bibinfo{number}{1}), pp. \bibinfo{pages}{84--121},
  \doi{10.1016/J.IC.2004.02.004}.

\bibitemdeclare{article}{Fitting02}
\bibitem{Fitting02}
\bibinfo{author}{M.~\surnamestart Fitting\surnameend} (\bibinfo{year}{2002}):
  \emph{\bibinfo{title}{Fixpoint semantics for logic programming a survey}}.
\newblock {\slshape \bibinfo{journal}{Theor. Comput. Sci.}}
  \bibinfo{volume}{278}(\bibinfo{number}{1-2}), pp. \bibinfo{pages}{25--51},
  \doi{10.1016/S0304-3975(00)00330-3}.

\bibitemdeclare{inproceedings}{Heyting30}
\bibitem{Heyting30}
\bibinfo{author}{A.~\surnamestart Heyting\surnameend} (\bibinfo{year}{1930}):
  \emph{\bibinfo{title}{Die formalen Regeln der intuitionistischen Logik}}.
\newblock In: {\slshape \bibinfo{booktitle}{Sitzungsberichte der Preussischen
  Akademie der Wissenschaften, physikalisch-mathematische klass}}.

\bibitemdeclare{inproceedings}{HuangGL11}
\bibitem{HuangGL11}
\bibinfo{author}{S.~S. \surnamestart Huang\surnameend}, \bibinfo{author}{T.~J.
  \surnamestart Green\surnameend} \& \bibinfo{author}{B.~T. \surnamestart
  Loo\surnameend} (\bibinfo{year}{2011}): \emph{\bibinfo{title}{Datalog and
  emerging applications: an interactive tutorial}}.
\newblock In \bibinfo{editor}{T.~K. \surnamestart Sellis\surnameend},
  \bibinfo{editor}{R.~J. \surnamestart Miller\surnameend},
  \bibinfo{editor}{A.~\surnamestart Kementsietsidis\surnameend} \&
  \bibinfo{editor}{Y.~\surnamestart Velegrakis\surnameend}, editors: {\slshape
  \bibinfo{booktitle}{Proceedings of the {ACM} {SIGMOD} International
  Conference on Management of Data, {SIGMOD} 2011, Athens, Greece, June 12-16,
  2011}}, \bibinfo{publisher}{{ACM}}, pp. \bibinfo{pages}{1213--1216},
  \doi{10.1145/1989323.1989456}.

\bibitemdeclare{inproceedings}{KalayciXRKC18}
\bibitem{KalayciXRKC18}
\bibinfo{author}{E.~G. \surnamestart Kalayci\surnameend},
  \bibinfo{author}{G.~\surnamestart Xiao\surnameend},
  \bibinfo{author}{V.~\surnamestart Ryzhikov\surnameend},
  \bibinfo{author}{T.~E. \surnamestart Kalayci\surnameend} \&
  \bibinfo{author}{D.~\surnamestart Calvanese\surnameend}
  (\bibinfo{year}{2018}): \emph{\bibinfo{title}{Ontop-temporal: A Tool for
  Ontology-based Query Answering over Temporal Data}}.
\newblock In \bibinfo{editor}{A.~\surnamestart Cuzzocrea\surnameend},
  \bibinfo{editor}{J.~\surnamestart Allan\surnameend}, \bibinfo{editor}{N.~W.
  \surnamestart Paton\surnameend}, \bibinfo{editor}{D.~\surnamestart
  Srivastava\surnameend}, \bibinfo{editor}{R.~\surnamestart
  Agrawal\surnameend}, \bibinfo{editor}{A.~Z. \surnamestart Broder\surnameend},
  \bibinfo{editor}{M.~J. \surnamestart Zaki\surnameend}, \bibinfo{editor}{K.~S.
  \surnamestart Candan\surnameend}, \bibinfo{editor}{A.~\surnamestart
  Labrinidis\surnameend}, \bibinfo{editor}{A.~\surnamestart
  Schuster\surnameend} \& \bibinfo{editor}{H.~\surnamestart Wang\surnameend},
  editors: {\slshape \bibinfo{booktitle}{Proceedings of the 27th {ACM}
  International Conference on Information and Knowledge Management, {CIKM}
  2018, Torino, Italy, October 22-26, 2018}}, \bibinfo{publisher}{{ACM}}, pp.
  \bibinfo{pages}{1927--1930}, \doi{10.1145/3269206.3269230}.

\bibitemdeclare{article}{KetsmanK22}
\bibitem{KetsmanK22}
\bibinfo{author}{B.~\surnamestart Ketsman\surnameend} \&
  \bibinfo{author}{P.~\surnamestart Koutris\surnameend} (\bibinfo{year}{2022}):
  \emph{\bibinfo{title}{Modern Datalog Engines}}.
\newblock {\slshape \bibinfo{journal}{Found. Trends Databases}}
  \bibinfo{volume}{12}(\bibinfo{number}{1}), pp. \bibinfo{pages}{1--68},
  \doi{10.1561/1900000073}.

\bibitemdeclare{inproceedings}{KikotRWZ18}
\bibitem{KikotRWZ18}
\bibinfo{author}{S.~\surnamestart Kikot\surnameend},
  \bibinfo{author}{V.~\surnamestart Ryzhikov\surnameend},
  \bibinfo{author}{P.~A. \surnamestart Walega\surnameend} \&
  \bibinfo{author}{M.~\surnamestart Zakharyaschev\surnameend}
  (\bibinfo{year}{2018}): \emph{\bibinfo{title}{On the Data Complexity of
  Ontology-Mediated Queries with {MTL} Operators over Timed Words}}.
\newblock In \bibinfo{editor}{M.~\surnamestart Ortiz\surnameend} \&
  \bibinfo{editor}{T.~\surnamestart Schneider\surnameend}, editors: {\slshape
  \bibinfo{booktitle}{Proceedings of the 31st International Workshop on
  Description Logics co-located with 16th International Conference on
  Principles of Knowledge Representation and Reasoning {(KR} 2018), Tempe,
  Arizona, US, October 27th - to - 29th, 2018}}, {\slshape
  \bibinfo{series}{{CEUR} Workshop Proceedings}} \bibinfo{volume}{2211},
  \bibinfo{publisher}{CEUR-WS.org}.

\bibitemdeclare{inproceedings}{Koopmann19}
\bibitem{Koopmann19}
\bibinfo{author}{P.~\surnamestart Koopmann\surnameend} (\bibinfo{year}{2019}):
  \emph{\bibinfo{title}{Ontology-Based Query Answering for Probabilistic
  Temporal Data}}.
\newblock In: {\slshape \bibinfo{booktitle}{The Thirty-Third {AAAI} Conference
  on Artificial Intelligence, {AAAI} 2019, The Thirty-First Innovative
  Applications of Artificial Intelligence Conference, {IAAI} 2019, The Ninth
  {AAAI} Symposium on Educational Advances in Artificial Intelligence, {EAAI}
  2019, Honolulu, Hawaii, USA, January 27 - February 1, 2019}},
  \bibinfo{publisher}{{AAAI} Press}, pp. \bibinfo{pages}{2903--2910},
  \doi{10.1609/AAAI.V33I01.33012903}.

\bibitemdeclare{article}{Koymans90}
\bibitem{Koymans90}
\bibinfo{author}{R.~\surnamestart Koymans\surnameend} (\bibinfo{year}{1990}):
  \emph{\bibinfo{title}{Specifying Real-Time Properties with Metric Temporal
  Logic}}.
\newblock {\slshape \bibinfo{journal}{Real Time Syst.}}
  \bibinfo{volume}{2}(\bibinfo{number}{4}), pp. \bibinfo{pages}{255--299},
  \doi{10.1007/BF01995674}.

\bibitemdeclare{inproceedings}{MoriPBG22}
\bibitem{MoriPBG22}
\bibinfo{author}{M.~\surnamestart Mori\surnameend},
  \bibinfo{author}{P.~\surnamestart Papotti\surnameend},
  \bibinfo{author}{L.~\surnamestart Bellomarini\surnameend} \&
  \bibinfo{author}{O.~\surnamestart Giudice\surnameend} (\bibinfo{year}{2022}):
  \emph{\bibinfo{title}{Neural Machine Translation for Fact-checking Temporal
  Claims}}.
\newblock In \bibinfo{editor}{R.~\surnamestart Aly\surnameend},
  \bibinfo{editor}{C.~\surnamestart Christodoulopoulos\surnameend},
  \bibinfo{editor}{O.~\surnamestart Cocarascu\surnameend},
  \bibinfo{editor}{Z.~\surnamestart Guo\surnameend},
  \bibinfo{editor}{A.~\surnamestart Mittal\surnameend},
  \bibinfo{editor}{M.~\surnamestart Schlichtkrull\surnameend},
  \bibinfo{editor}{J.~\surnamestart Thorne\surnameend} \&
  \bibinfo{editor}{A.~\surnamestart Vlachos\surnameend}, editors: {\slshape
  \bibinfo{booktitle}{Proceedings of the Fifth Fact Extraction and VERification
  Workshop (FEVER)}}, \bibinfo{publisher}{Association for Computational
  Linguistics}, \bibinfo{address}{Dublin, Ireland}, pp.
  \bibinfo{pages}{78--82}, \doi{10.18653/v1/2022.fever-1.8}.

\bibitemdeclare{inproceedings}{NisslS22}
\bibitem{NisslS22}
\bibinfo{author}{M.~\surnamestart Nissl\surnameend} \&
  \bibinfo{author}{E.~\surnamestart Sallinger\surnameend}
  (\bibinfo{year}{2022}): \emph{\bibinfo{title}{Modelling Smart Contracts with
  DatalogMTL}}.
\newblock In \bibinfo{editor}{M.~\surnamestart Ramanath\surnameend} \&
  \bibinfo{editor}{T.~\surnamestart Palpanas\surnameend}, editors: {\slshape
  \bibinfo{booktitle}{Proceedings of the Workshops of the {EDBT/ICDT} 2022
  Joint Conference, Edinburgh, UK, March 29, 2022}}, {\slshape
  \bibinfo{series}{{CEUR} Workshop Proceedings}} \bibinfo{volume}{3135},
  \bibinfo{publisher}{CEUR-WS.org}.

\bibitemdeclare{inproceedings}{Pearce96}
\bibitem{Pearce96}
\bibinfo{author}{D.~\surnamestart Pearce\surnameend} (\bibinfo{year}{1996}):
  \emph{\bibinfo{title}{A New Logical Characterisation of Stable Models and
  Answer Sets}}.
\newblock In \bibinfo{editor}{J.~\surnamestart Dix\surnameend},
  \bibinfo{editor}{L.~Moniz \surnamestart Pereira\surnameend} \&
  \bibinfo{editor}{T.~C. \surnamestart Przymusinski\surnameend}, editors:
  {\slshape \bibinfo{booktitle}{Non-Monotonic Extensions of Logic Programming,
  {NMELP} '96, Bad Honnef, Germany, September 5-6, 1996, Selected Papers}},
  {\slshape \bibinfo{series}{Lecture Notes in Computer Science}}
  \bibinfo{volume}{1216}, \bibinfo{publisher}{Springer}, pp.
  \bibinfo{pages}{57--70}, \doi{10.1007/BFB0023801}.

\bibitemdeclare{inproceedings}{PollaciKDB2025}
\bibitem{PollaciKDB2025}
\bibinfo{author}{S.~\surnamestart Pollaci\surnameend},
  \bibinfo{author}{B.~\surnamestart Kostopoulos\surnameend},
  \bibinfo{author}{M.~\surnamestart Denecker\surnameend} \&
  \bibinfo{author}{B.~\surnamestart Bogaerts\surnameend}
  (\bibinfo{year}{2025}): \emph{\bibinfo{title}{A Category-Theoretic
  Perspective on Higher-Order Approximation Fixpoint Theory}}.
\newblock In \bibinfo{editor}{C.~\surnamestart Dodaro\surnameend},
  \bibinfo{editor}{G.~\surnamestart Gupta\surnameend} \& \bibinfo{editor}{M.~V.
  \surnamestart Martinez\surnameend}, editors: {\slshape
  \bibinfo{booktitle}{Logic Programming and Nonmonotonic Reasoning}}, {\slshape
  \bibinfo{series}{Lecture Notes in Computer Science}} \bibinfo{volume}{15245},
  \bibinfo{publisher}{Springer Nature Switzerland}, \bibinfo{address}{Cham},
  pp. \bibinfo{pages}{317--330}, \doi{10.1007/978-3-031-74209-5_24}.

\bibitemdeclare{article}{Strass13}
\bibitem{Strass13}
\bibinfo{author}{H.~\surnamestart Strass\surnameend} (\bibinfo{year}{2013}):
  \emph{\bibinfo{title}{Approximating operators and semantics for abstract
  dialectical frameworks}}.
\newblock {\slshape \bibinfo{journal}{Artif. Intell.}} \bibinfo{volume}{205},
  pp. \bibinfo{pages}{39--70}, \doi{10.1016/J.ARTINT.2013.09.004}.

\bibitemdeclare{article}{Tarski55}
\bibitem{Tarski55}
\bibinfo{author}{A.~\surnamestart Tarski\surnameend} (\bibinfo{year}{1955}):
  \emph{\bibinfo{title}{A lattice-theoretical fixpoint theorem and its
  applications}}.
\newblock {\slshape \bibinfo{journal}{Pacific journal of mathematics 5}}, pp.
  \bibinfo{pages}{285--309}, \doi{10.2140/pjm.1955.5.285}.

\bibitemdeclare{article}{VennekensGD06}
\bibitem{VennekensGD06}
\bibinfo{author}{J.~\surnamestart Vennekens\surnameend},
  \bibinfo{author}{D.~\surnamestart Gilis\surnameend} \&
  \bibinfo{author}{M.~\surnamestart Denecker\surnameend}
  (\bibinfo{year}{2006}): \emph{\bibinfo{title}{Splitting an operator:
  Algebraic modularity results for logics with fixpoint semantics}}.
\newblock {\slshape \bibinfo{journal}{{ACM} Trans. Comput. Log.}}
  \bibinfo{volume}{7}(\bibinfo{number}{4}), pp. \bibinfo{pages}{765--797},
  \doi{10.1145/1183278.1183284}.

\bibitemdeclare{article}{WalegaCGK24}
\bibitem{WalegaCGK24}
\bibinfo{author}{P.~A. \surnamestart Walega\surnameend},
  \bibinfo{author}{D.~J.~Tena \surnamestart Cucala\surnameend},
  \bibinfo{author}{B.~Cuenca \surnamestart Grau\surnameend} \&
  \bibinfo{author}{E.~V. \surnamestart Kostylev\surnameend}
  (\bibinfo{year}{2024}): \emph{\bibinfo{title}{The Stable Model Semantics of
  Datalog with Metric Temporal Operators}}.
\newblock {\slshape \bibinfo{journal}{Theory Pract. Log. Program.}}
  \bibinfo{volume}{24}(\bibinfo{number}{1}), pp. \bibinfo{pages}{22--56},
  \doi{10.1017/S1471068423000315}.

\bibitemdeclare{inproceedings}{WalegaKG19}
\bibitem{WalegaKG19}
\bibinfo{author}{P.~A. \surnamestart Walega\surnameend},
  \bibinfo{author}{M.~\surnamestart Kaminski\surnameend} \&
  \bibinfo{author}{B.~C. \surnamestart Grau\surnameend} (\bibinfo{year}{2019}):
  \emph{\bibinfo{title}{Reasoning over Streaming Data in Metric Temporal
  Datalog}}.
\newblock In: {\slshape \bibinfo{booktitle}{The Thirty-Third {AAAI} Conference
  on Artificial Intelligence, {AAAI} 2019, The Thirty-First Innovative
  Applications of Artificial Intelligence Conference, {IAAI} 2019, The Ninth
  {AAAI} Symposium on Educational Advances in Artificial Intelligence, {EAAI}
  2019, Honolulu, Hawaii, USA, January 27 - February 1, 2019}},
  \bibinfo{publisher}{{AAAI} Press}, pp. \bibinfo{pages}{3092--3099},
  \doi{10.1609/AAAI.V33I01.33013092}.

\bibitemdeclare{article}{WalegaKWG23}
\bibitem{WalegaKWG23}
\bibinfo{author}{P.~A. \surnamestart Walega\surnameend},
  \bibinfo{author}{M.~\surnamestart Kaminski\surnameend},
  \bibinfo{author}{D.~\surnamestart Wang\surnameend} \& \bibinfo{author}{B.~C.
  \surnamestart Grau\surnameend} (\bibinfo{year}{2023}):
  \emph{\bibinfo{title}{Stream reasoning with DatalogMTL}}.
\newblock {\slshape \bibinfo{journal}{J. Web Semant.}} \bibinfo{volume}{76},
  \doi{10.1016/J.WEBSEM.2023.100776}.

\end{thebibliography}
\end{document}